\pgfplotsset{compat=newest}
\pgfplotsset{%
    layers/standard/.define layer set={%
        background,axis background,axis grid,axis ticks,axis lines,axis tick labels,pre main,main,axis descriptions,axis foreground%
    }{
        grid style={/pgfplots/on layer=axis grid},%
        tick style={/pgfplots/on layer=axis ticks},%
        axis line style={/pgfplots/on layer=axis lines},%
        label style={/pgfplots/on layer=axis descriptions},%
        legend style={/pgfplots/on layer=axis descriptions},%
        title style={/pgfplots/on layer=axis descriptions},%
        colorbar style={/pgfplots/on layer=axis descriptions},%
        ticklabel style={/pgfplots/on layer=axis tick labels},%
        axis background@ style={/pgfplots/on layer=axis background},%
        3d box foreground style={/pgfplots/on layer=axis foreground},%
    },
}
\theoremstyle{plain}
\newtheorem{theorem}{Theorem}[section]
\newtheorem{proposition}[theorem]{Proposition}
\newtheorem{lemma}[theorem]{Lemma}
\theoremstyle{definition}
\newtheorem{assumption}[theorem]{Assumption}
\theoremstyle{remark}
\newtheorem{remark}[theorem]{Remark}
\icmltitlerunning{Optimal Sensor Scheduling and Selection with Auxiliary Dynamics}
\begin{document}

\twocolumn[
\icmltitle{Optimal Sensor Scheduling and Selection for Continuous-Discrete Kalman Filtering with Auxiliary Dynamics}




\begin{icmlauthorlist}
\icmlauthor{Mohamad Al Ahdab}{AAU,Pioneer}
\icmlauthor{John Leth}{AAU}
\icmlauthor{Zheng-Hua Tan}{AAU,Pioneer}
\end{icmlauthorlist}

\icmlaffiliation{AAU}{Department of Electronic Systems, Aalborg University, Aalborg Øst 9220, Denmark.}
\icmlaffiliation{Pioneer}{Pioneer Centre for Artificial Intelligence, Copenhagen 1350, Denmark.}

\icmlcorrespondingauthor{Mohamad Al Ahdab}{maah@es.aau.dk}

\icmlkeywords{Machine Learning, ICML}

\vskip 0.3in
]



\printAffiliationsAndNotice{}  

\begin{abstract}
We study the Continuous-Discrete Kalman Filter (CD-KF) for State-Space Models (SSMs) where continuous-time dynamics are observed via multiple sensors with discrete, irregularly timed measurements. Our focus extends to scenarios in which the measurement process is coupled with the states of an auxiliary SSM. For instance, higher measurement rates may increase energy consumption or heat generation, while a sensor’s accuracy can depend on its own spatial trajectory or that of the measured target. Each sensor thus carries distinct costs and constraints associated with its measurement rate and additional constraints and costs on the auxiliary state. We model measurement occurrences as independent Poisson processes with sensor-specific rates and derive an upper bound on the mean posterior covariance matrix of the CD-KF along the mean auxiliary state. The bound is continuously differentiable with respect to the measurement rates, which enables efficient gradient-based optimization. Exploiting this bound, we propose a finite-horizon optimal control framework to optimize measurement rates and auxiliary-state dynamics jointly. We further introduce a deterministic method for scheduling measurement times from the optimized rates. Empirical results in state-space filtering and dynamic temporal Gaussian process regression demonstrate that our approach achieves improved trade-offs between resource usage and estimation accuracy.
\end{abstract}

\section{Introduction}
State-space models (SSMs) are fundamental tools for addressing sequential inference challenges in time-series forecasting, signal processing, and dynamic systems. A core task in SSMs is Bayesian filtering, which aims to compute the posterior distribution of latent states online given noisy observations. For linear systems driven by independently and identically distributed process noise with known dynamics, the Kalman filter (KF) is the optimal linear filter in mean squared error \cite{Kalman_org}. In the case of nonlinear dynamics, approximate inference methods such as the Extended Kalman Filter (EKF) and the Unscented Kalman Filter (UKF) are commonly used \cite{simon2006optimal,sarkka2023bayesian}. In this paper, we focus on the Continuous-Discrete Kalman Filter (CD-KF) setup, where continuous-time state evolution is observed via multiple sensors with discrete, and possibly irregularly timed, measurements. This choice is motivated by the fact that many real-world processes (e.g., blood pressure, ocean temperature, or radiation levels) evolve continuously but are only observed intermittently due to hardware constraints and operational considerations. Such formulation is especially relevant when selecting from multiple sensors operating under distinct costs, conditions, and constraints. Moreover, the measurement process is often coupled to auxiliary dynamics such as a sensor’s temperature, spatial position, or stored energy, which affect measurement cost and quality. For example, consider a low-orbit satellite observing terrestrial phenomena (e.g., ocean temperature) using sensors whose accuracy depends on the satellite's orbital position. The satellite has two types of sensors: high-resolution sensors that are accurate during daylight periods, and radar-based sensors that are always available, but are energy-intensive and less accurate.
Another example is a glucose monitoring setup where a diabetic patient tracks blood glucose via three different types of sensors: a continuous glucose monitor that provides noisy but frequently sampled measurements, intermittent finger-prick tests that offer higher accuracy but cause patient discomfort, and very accurate but infrequent and costly clinical blood tests. In these examples, the sensors differ not only in accuracy but also in their relation to auxiliary quantities such as energy consumption, monetary costs, and human discomfort. This means that adopting a naive strategy of uniform high-rate sampling for all the sensors is undesirable and practically infeasible. Instead, measurement decisions must be scheduled to balance between minimizing the uncertainty in what they are measuring while simultaneously accounting for their interaction with dynamically changing auxiliary variables.


In the literature, most existing works do not consider the continuous-discrete setup. Additionally, scheduling measurements while simultaneously considering their interaction with an auxiliary state has not been explored directly in the literature. In this paper, we address this gap by formalizing the problem of optimal sensor scheduling for continuous-discrete SSMs with auxiliary dynamics and proposing an optimization scheme to solve it. Specifically, we model the arrival of the measurements from each sensor as a Poisson process with a time-varying rate. We then formulate an optimization problem that is continuously differentiable with respect to the measurement rates of each sensor and other relevant decision variables affecting the auxiliary dynamics (e.g, the trajectory of a vehicle carrying a sensor). Finally, we provide a deterministic strategy to select the measurement time instances for each sensor based on the optimized measurement rate for that sensor. Our contributions in this paper are summarized as follows:
\begin{enumerate}
    \item We derive an upper bound on the mean posterior covariance matrix of the CD-KF conditioned on the mean auxiliary state. This bound is continuously differentiable with respect to the sensor-specific measurement rates, which enables efficient gradient-based optimization.
    \item We propose the setup of a finite-horizon optimal control framework to jointly optimize measurement rates and other inputs that are relevant to the auxiliary dynamics.
    \item We propose a deterministic method for obtaining measurement time events from the optimized rates. The method is based on minimizing the Wasserstein distance between the distribution of measurement events generated by the optimized Poisson rates and an empirical distribution determined by the expected number of measurements for each sensor.
    \end{enumerate}

\section{Related Works}
\textbf{Sensor Scheduling and Selection in Kalman Filtering:}
The problem of optimizing sensor usage in Kalman filtering has been studied across diverse settings. Early work by \cite{le2009scheduling_Continous_LTI} addressed continuous-time sensor management for Linear Time-Invariant (LTI) systems but assumed continuous measurements, a restrictive assumption for practical systems with discrete, asynchronous sensor observations. Subsequent studies focused on discrete-time formulations: \cite{ORIHUELA20142672_periodKF_discrete} developed periodic scheduling policies for discrete LTI systems, while \cite{MARELLI2019390_Stochastic_LTI_discrete} introduced stochastic scheduling strategies under resource constraints. Optimal control perspectives have also been explored, such as the infinite-horizon formulation in \cite{Zhao_Infnitehorizon_discrete_LTI} and the finite-horizon networked control framework in \cite{ayan2020aoi_finiteHorizon_Networkdyn_LTI_discrete}, which incorporated auxiliary network dynamics but retained a discrete-time LTI assumption. These works do not address the continuous-discrete setting with irregular measurements, nor do they consider the coupling between sensor scheduling and general auxiliary state dynamics (e.g., sensor's spatial trajectory and the available stored energy). Our work generalizes these approaches by unifying continuous-time state evolution with stochastic measurement scheduling via Poisson processes and integrating auxiliary state dynamics into the framework.

\textbf{Active Sensing:}
Active sensing encompasses both sensor scheduling and trajectory optimization to maximize information gain. Recent advances leverage deep learning for sequential decision-making, for example,  \cite{yoon2018deep_RRNActive} used recurrent neural networks to dynamically select medical tests based on patient history, while \cite{qin2024risk_CD_NERUALODE} proposed controlled neural ordinary differential equations to determine optimal measurement intervals in continuous-discrete settings. However, these methods focus on discrete classification tasks (e.g., disease diagnosis) rather than Bayesian state estimation.  Recent work by \cite{napolitano2024activeGP} optimized for informative trajectories for Gaussian process (GP) regression in a model learning setup. However, these approaches do not consider the challenge of sensor scheduling and selection with general auxiliary state dynamics interacting with the sensor (e.g., the sensor's temperature affects its accuracy).

\textbf{Bayesian Optimization:}
Bayesian Optimization (BO) has been widely adopted for optimizing expensive black-box functions, particularly in experimental design \cite{snoek2012practical}. Classical BO methods rely on acquisition functions like Expected Improvement \cite{jones1998efficient} or Upper Confidence Bound \cite{srinivas2012information} to balance exploration and exploitation. Recent extensions integrate BO with reinforcement learning for sequential decision-making under uncertainty \cite{ling2016gaussian}. BO has also been used to design experiments with mutual information in \cite{kleinegesse2020bayesian}. These methods do not exploit known system dynamics or state-space structures. In contrast, our work leverages the analytic properties of the CD-KF to derive differentiable bounds on the estimation error (mean-squared error), which enables gradient-based optimization of measurement policies. Additionally, these approaches do not account for the influence of auxiliary dynamics on the measurements and their costs and constraints. Our formulation generalizes these settings by unifying continuous-time dynamics, stochastic measurement scheduling, and auxiliary state constraints within a single optimal control framework.

To the best of our knowledge, the framework of scheduling sensors and measurements in a CD-KF setup with auxiliary dynamics has not been explored before. This work proposes this framework and presents some results on efficiently solving a class of this problem.

\section{Notation}
We denote by \(\mathbb{S}^n_{\ge0}\) (\(\mathbb{S}^n_{>0}\)) the cones of positive semi-definite (positive definite) \(n\times n\) matrices.  The Loewner order on \(\mathbb{S}^n_{\ge0}\) (\(\mathbb{S}^n_{>0}\)) is written as \(A\preceq B\) (\(A\prec B\)), while element‐wise inequalities are written as \(\le_e \), \(<_e \),\(\ge_e \), and \(>_e\).  We write $\mathrm{I}_n$ for the $n\times n$ identity matrix, and $\mathrm{tr}(M)$ for the trace of a matrix $M$.  The Dirac measure at $t_i$ is denoted $\delta_{t_i}$. The indicator function \(\mathbf{1}_{A}: X\to \{0,1\}\) is defined by \[
\mathbf{1}_{A}(x):=\begin{cases}
    1,\quad x\in A,\\
    0, \quad x\notin A.
\end{cases}
\]

\section{Background: Bayesian Filtering for SSMs}  
A (continuous-discrete stochastic) SSM is governed by:  
\begin{subequations}  
    \label{eq:C_SSM}  
    \begin{align}  
    \label{eq:C_SSM_dynamics}  
    &dx = A(t)xdt + \sigma(t)dW, \quad x_0 \sim \mathcal{N}(\mu_0,\Sigma_0), \\  
    &y(t_i) = C(t_i)x(t_i) + v(t_i),~v(t_i)\sim \mathcal{N}(0,R(t_i)),  
    \end{align}  
\end{subequations}  
where \(x \in \mathbb{R}^n\) is the state, \(dW \in \mathbb{R}^m\) is a Wiener process, \(A(t)\in\mathbb{R}^{n\times n}\) and \(\sigma(t)\in \mathbb{R}^{n\times m}\) define the drift and diffusion matrices, \(y(t_i) \in \mathbb{R}^{q}\) is a discrete-time measurement at time \(t_i\) (here, \(i\) indexes the measurement occasions, so \(t_i\) denotes the \(i\)-th observation time), \(v(t_i)\) is the measurement noise \(v(t_i) \sim \mathcal{N}(0, R(t_i))\) with \(R(t_i)\in\mathbb{S}_{>0}^{q}\) (independent and identically distributed), and \(C(t_i)\in\mathbb{R}^{q\times n}\) is the output matrix.

For time instants \(t_1<t_2<\dots<t_i\), The Bayesian filtering problem involves sequentially obtaining \(p(x(t_i) \mid y(t_1), \dots, y(t_i))\). For linear-Gaussian SSMs, the CD-KF provides exact closed-form solutions for both the \textit{filtering density} \(p(x(t_i) \mid y(t_1), \dots, y(t_i))\) and the \textit{prediction density} \(p(x(t) \mid y(t_1), \dots, y(t_i))\) for \(t > t_i\) \cite{jazwinski2013stochastic}. Given a Gaussian prior \(x(0) \sim \mathcal{N}(\mu_0, \Sigma_0)\) with \(\mu_0\in\mathbb{R}^{n}\) and \(\Sigma_0\in\mathbb{S}^n_{>0}\), the CD-KF follows the following steps:  

\textbf{Prediction} (\(t \in [t_{i-1}, t_i)\)):  
\begin{subequations}  
    \label{eq:CD_KF_predict}  
    \begin{align}  
        \label{eq:predict_mean}  
        &\frac{d\mu}{dt} = A(t)\mu, \\  \label{eq:predict_covariance}  
        &\frac{d\Sigma}{dt} = A(t)\Sigma + \Sigma A^\top(t) + \sigma(t)\sigma^\top(t),  
    \end{align}  
\end{subequations}  

\textbf{Update} (at measurement \(t_i\)):  
\begin{subequations}  
    \label{eq:CD_KF}  
    \begin{align}    
        \label{eq:update_mean}  
        &\mu(t_i) = \mu(t_i^-) + K(\Sigma(t^-_i),t_i)\left(y(t_i) - C(t_i)\mu(t_i^-)\right), \\  
        \label{eq:update_covariance}  
        &\Sigma(t_i) = \left(\mathrm{I}_n - K(\Sigma(t^-_i),t_i)C(t_i)\right)\Sigma(t_i^-),  
    \end{align}  
\end{subequations}  
where 
\begin{multline}
K(\Sigma(t^-_i),t_i) :=\\ \Sigma(t_i^-)C^\top(t_i)\left(C(t_i)\Sigma(t_i^-)C^\top(t_i) + R(t_i)\right)^{-1},
\end{multline} 
is the Kalman gain. 
Let $N(t)$ be the total number of measurements up until time $t$, then the equation for the covariance $\Sigma$ in the CD-KF can be written compactly as 
\begin{multline}
    \label{eq:Impulse_CDKF}
        \frac{d\Sigma}{dt} = A(t)\Sigma + \Sigma A^\top(t) + \sigma(t)\sigma^\top(t) \\- K(\Sigma,t)C(t)\Sigma\sum^{N(t)}_{i=1} \delta_{t_i},
\end{multline}
where $\delta_{t_i}$ is the dirac delta measure at \(t_i\).  

\section{The Problem Setup}
We consider a multi-sensor setup in which, at each measurement time, a sensor \(s \in \{1, \dots, S\}\) can be selected from \(S\) available sensors. Additionally, we introduce an auxiliary state \(\xi \in \mathbb{R}^{n_\xi}\) (e.g., representing the position of a vehicle and its energy storage), governed by the following dynamics:
\begin{subequations}
    \label{eq:aux_SSM}
    \begin{align}
    \label{eq:xi_p_ode}
    &\frac{d\xi_p}{dt} = f_p(\xi, u, t) + \sum_{s=1}^S g_s(\xi, u, t) \sum_{i=1}^{N_s(t)} \delta_{t^s_i}, \\
    \label{eq:xi_u_ode}
    &\frac{d\xi_u}{dt} = f_u(\xi_u, u, t),
    \end{align}
\end{subequations}
where \(\xi = [ \xi_u  ~ \xi_p ]^\top\). Here, \(\xi_p\in\mathbb{R}^{n_p}\) denotes the "perturbed" part of \(\xi\) affected by measurements, while \(\xi_u\in \mathbb{R}^{n_\xi-n_p}\) represents the unperturbed part. The input signal \(u \in \mathbb{R}^{m_\xi}\) (e.g., the velocity of a vehicle) influences the dynamics, and \(N_s(t)\) represents the total number of measurements up to time \(t\) for sensor \(s\) with \(t_i^s\) being the \(i\)-th measurement instant for that sensor. Note that the differential equation for \(\xi_u\) depends only on \(\xi_u\). Therefore, given an initial condition \(\xi_u(0)=\xi_{u_0}\) and an input trajctory \(u(t)\in\mathbb{R}^{m_\xi}\), then we can solve for \(\xi_u(t)\) independantly from the measurement times. The functions \(f_p,g_1,\dots,g_S\) and \(f_u\) are assumed to be Lipschitz continuous in \(\xi,~u,\) and \(t\).

Decomposing the auxiliary state into \(\xi_p\) and \(\xi_u\) is general and does not impose restrictions. For instance, if all of \(\xi\) is perturbed by measurements, then \(\xi = \xi_p\); conversely, if none of \(\xi\) is perturbed, then \(\xi = \xi_u\). This division becomes crucial later when analyzing the covariance structure of the KF. Often, such decomposition arises naturally. For example, \(\xi_u\) may represent the kinematics or dynamics of a vehicle, which are typically independent of measurements, while \(\xi_p\) might correspond to the stored energy of the vehicle, which depends on position and velocity (e.g., locations for recharging or energy consumption rates at different terrains). Additionally, the stored energy can decrease with each measurement, and the consumed energy by each measurement event for sensor \(s\in\{1,\dots,S\}\) is characterized by \(g_s\). 

We now extend the SSM in \eqref{eq:C_SSM_dynamics} to incorporate the auxiliary state \(\xi\):
\begin{subequations}  
    \label{eq:C_SSM_aux}  
    \begin{align}  
    \label{eq:C_SSM_dynamics_xi}  
    &dx = A(\xi, t)x \, dt + \sigma(\xi,t) \, dW, ~ x_0 \sim \mathcal{N}(\mu_0, \Sigma_0), \\  
    \label{eq:C_SSM_measurements_xi}
    &y^s(t_i) = C_s(\xi(t_i),t_i)x(t_i) + v^s(\xi(t_i),t_i),  
    \end{align}  
\end{subequations}
where \(y^s(t_i)\) is the measurement taken by sensor \(s\) at time \(t_i\), \(C_s\) is the corresponding output matrix with the measurement noise being \(v^s(\xi(t_i),t_i) \sim \mathcal{N}\left(0, R_s(\xi(t_i), t_i)\right)\). Here, we also assume that \(A,\sigma, C_1,\dots,C_S\) and \(R_1,\dots,R_S\) are Lipshitz continuous in \(\xi\) and \(t\).

Our goal in this paper is to jointly optimize the measurement times for each sensor and the input \(u\) within the context of a CD-KF.

In this paper, we will consider the following assumption for the dynamics of the auxiliary variables in \eqref{eq:aux_SSM}.
\begin{assumption}
\label{asm:Convexity_of_xi_p}
The functions \(f_p\) and \(g_s(\xi, u, t)\) are concave (convex) for all \(s \in \{1, \dots, S\}\)  with respect to \(\xi_p\) (in the elementwise or product order sense).
\end{assumption}
This concavity (convexity) assumption is essential for deriving upper (lower) bounds that lead to a tractable optimization problem for the state $\xi_p$. Importantly, it encompasses a broad class of systems, including all affine systems of the form \(\alpha(\xi_u, u, t)\xi_p + \beta(\xi_u, u, t)\), which satisfy this assumption.

\section{The Kalman Filter with Randomized Measurements}
We now describe the randomized KF framework that forms the foundation for computing optimal measurement rates. Specifically, we model the arrival of measurements from each sensor \(s \in \{1,\dots,S\}\) as a Poisson process \(N_s(t)\) with time-dependent intensity \(\lambda_s(t)\).
The composite measurement count is \(N(t) = \sum^S_{s=1} N_s(t)\), and the randomized covariance matrix evolution for the CD-KF then becomes:
\begin{equation}
    \label{eq:Jump_CDKF}
    \begin{split}
        d\Sigma &= \left(A(\xi,t)\Sigma +\Sigma A^\top(\xi,t) + \sigma(\xi,t)\sigma^\top(\xi,t) \right) \, dt \\
        &\quad - \sum^S_{s=1} K_s(\Sigma,\xi,t)C_s(\xi,t)\Sigma \, dN_s,
    \end{split}
\end{equation}
where the Kalman gain \(K_s(\Sigma,\xi,t)\) is given by:
\begin{multline*}
    \label{eq:KalmanGain}
    K_s(\Sigma,\xi,t) := \\\Sigma C^\top_s(\xi,t)\left(C_s(\xi,t)\Sigma C^\top_s(\xi,t)+R_s(\xi,t)\right)^{-1}.
\end{multline*}
Similarly, the perturbed part of the auxiliary state becomes
\begin{equation}
    \label{eq:Jump_xi_p}
    d\xi_p = f_p(\xi, u, t) \, dt + \sum_{s=1}^S  g_s(\xi, u, t) \, dN_s. 
\end{equation}
We now state the following two propositions, which enable us to formulate a deterministic optimization problem in the rates \(\lambda_1,\dots, \lambda_S\) and the input \(u\). Throughout the rest of the paper, we will assume that the jump size functions satisfy the mean-square integrability condition (e.g., see Theorem 3.20 in \cite{hanson2007applied}).

\begin{proposition}[Covariance Matrix Bound]
    \label{prop:CovarianceBound} Given initial conditions \(\xi_0 \in \mathbb{R}^{n_\xi}\) and \(\Sigma_0 \in\mathbb{S}^n_{>0}\), let \(\xi(t)\) be the solution to \eqref{eq:Jump_xi_p} with \(\xi(0)=\xi_0\) and \(\Sigma(t)\) be the solution to \eqref{eq:Jump_CDKF} with \(\Sigma(0)=\Sigma_0\). Let \(\bar{\xi}(t) := \mathbb{E}[\xi(t)]\) with \(\bar{\xi}(0) = \xi_0\) and \(\bar{\Sigma}(t;\xi^*):= \mathbb{E}[\Sigma(t)|\xi(t)=\xi^*(t)]\) for some \(\xi^*(t)\in\mathbb{R}^{n_\xi}\) with \(\bar{\Sigma}(t;\xi^*) = \Sigma_0\). Then, the solution \(\hat{\Sigma}(t)\) of
    \begin{multline}
        \label{eq:MeanBound_ODE_Sigma}
            \frac{d\hat{\Sigma}}{dt} = A(\xi^*,t)\hat{\Sigma} + \hat{\Sigma}A^\top(\xi^*,t) + \sigma(\xi^*,t)\sigma^\top(\xi^*,t) \\
            - \sum_{s=1}^S \lambda_s(t) K_s(\xi^*,t)C_s(\xi^*,t)\hat{\Sigma},~\hat{\Sigma}(0)=\Sigma_0,
        \end{multline}
        satisfies \(\bar{\Sigma}(t;\xi^*)\preceq \hat{\Sigma}(t),~\forall t \geq 0.\)
\end{proposition}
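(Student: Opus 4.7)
The plan is to reduce the claim to a Loewner-order comparison between two deterministic matrix ODEs. With the trajectory $\xi(t)=\xi^*(t)$ treated as fixed, all coefficients in~\eqref{eq:Jump_CDKF} become deterministic functions of $t$ and the only remaining randomness in $\Sigma$ enters through the Poisson counters $N_s$ with intensities $\lambda_s(t)$. Taking expectations and using the compensator identity $\mathbb{E}[dN_s]=\lambda_s(t)\,dt$ (valid under the stated mean-square integrability) gives
\begin{equation*}
\frac{d\bar\Sigma}{dt}=A(\xi^*,t)\bar\Sigma+\bar\Sigma A^\top(\xi^*,t)+\sigma\sigma^\top-\sum_{s=1}^S\lambda_s(t)\,\mathbb{E}\bigl[h_s(\Sigma,\xi^*,t)\bigm|\xi^*\bigr],
\end{equation*}
where $h_s(\Sigma,\xi^*,t):=\Sigma C_s^\top(C_s\Sigma C_s^\top+R_s)^{-1}C_s\Sigma=K_s(\Sigma,\xi^*,t)C_s(\xi^*,t)\Sigma$.

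The key inequality follows from operator convexity of $h_s$ in $\Sigma$. By the Woodbury identity, $h_s=\Sigma-(\Sigma^{-1}+C_s^\top R_s^{-1}C_s)^{-1}$; the parallel-sum map $\Sigma\mapsto(\Sigma^{-1}+M)^{-1}$, $M\succeq 0$, is classically operator concave on $\mathbb{S}^n_{>0}$, so $h_s$ is operator convex. Operator Jensen's inequality then yields $\mathbb{E}[h_s(\Sigma,\xi^*,t)\mid\xi^*]\succeq h_s(\bar\Sigma,\xi^*,t)$, and inserting this gives the matrix differential inequality
\begin{equation*}
\frac{d\bar\Sigma}{dt}\preceq F(\bar\Sigma,t):=A(\xi^*,t)\bar\Sigma+\bar\Sigma A^\top(\xi^*,t)+\sigma\sigma^\top-\sum_{s=1}^S\lambda_s(t)K_s(\bar\Sigma,\xi^*,t)C_s(\xi^*,t)\bar\Sigma,
\end{equation*}
whose right-hand side $F$ is precisely the vector field of~\eqref{eq:MeanBound_ODE_Sigma}.

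Since $\bar\Sigma(0;\xi^*)=\hat\Sigma(0)=\Sigma_0$, the conclusion $\bar\Sigma(t;\xi^*)\preceq\hat\Sigma(t)$ follows from a Loewner-order comparison theorem once quasi-monotonicity of $F$ is verified: whenever $Y\succeq X\succ 0$ and $(Y-X)v=0$ for some vector $v$, one must check $v^\top F(Y,t)v\ge v^\top F(X,t)v$. The Lyapunov and diffusion parts contribute zero along such contact directions, so what remains is $v^\top h_s(Y)v\le v^\top h_s(X)v$, which via the Woodbury form reduces to the immediate inequality $v^\top(Y^{-1}+M_s)^{-1}v\ge v^\top(X^{-1}+M_s)^{-1}v$ (a consequence of $Y\succeq X\Rightarrow Y^{-1}\preceq X^{-1}$).

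I expect the main obstacle to be this last quasi-monotonicity step. A naive approach that linearises $h_s(\hat\Sigma)-h_s(\bar\Sigma)$ along the segment $[\bar\Sigma,\hat\Sigma]$ and applies Gronwall would require operator monotonicity of $h_s$, which fails in general (its Fréchet derivative need not preserve the PSD cone, as a small $2\times 2$ example shows). Restricting instead to the boundary-contact directions $(Y-X)v=0$ is precisely the right weakening, and invoking a Volpert/Müller-type comparison principle on the Loewner cone then closes the argument.
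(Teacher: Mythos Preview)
Your approach mirrors the paper's: condition on $\xi=\xi^*$, take expectations to obtain an ODE for $\bar\Sigma$, apply Jensen via convexity of $h_s$, and close with a comparison principle. Two differences are worth flagging. First, you establish operator convexity of $h_s$ through the Woodbury identity and concavity of the parallel-sum map, whereas the paper (Lemma~\ref{lem:convixity_Kalman_update}) argues via the Schur complement that the epigraph of $h_s$ is cut out by an LMI; both routes are valid and comparably short. Second---and more substantively---the paper asserts that $F$ is \emph{order-preserving} in $\Sigma$, i.e.\ $Y\succeq X\Rightarrow F(Y,t)\succeq F(X,t)$, and then invokes a comparison theorem. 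That assertion is actually too strong: the Lyapunov part $\Sigma\mapsto A\Sigma+\Sigma A^\top$ is not Loewner-monotone for general $A$ (e.g.\ $A$ strictly upper-triangular and $D=Y-X$ rank one). Your reduction to the quasi-monotonicity condition on boundary-contact directions $(Y-X)v=0$ is precisely the hypothesis needed for a cone-valued comparison principle of M\"uller--Kamke/Volpert type, and your verification of it via the Woodbury form is clean. So while the overall strategy coincides with the paper's, your treatment of the comparison step is the more careful one and actually repairs a gap in the paper's stated justification.
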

\begin{proposition}[Auxiliary State Bound]
    \label{prop:AuxStateBound} Let \(\hat{\xi}(t) := [\hat{\xi}_p(t)\ \xi_u(t)]^\top\in\mathbb{R}^{n_\xi}\) be the solution to
    \begin{equation}
        \label{eq:MeanBound_ODE_xi}
    \frac{d\hat{\xi}_p}{dt} = f_p(\hat{\xi},u,t) + \sum^S_{s=1}\lambda_s(t)g_s(\hat{\xi},u,t), \quad \hat{\xi}(0) = \xi_0,
    \end{equation}
    where \(\xi_u\) evolves according to \eqref{eq:xi_u_ode}. If \(f_p\) and \(g_1,\dots,g_S\) are concave (convex) with respect to \(\xi_p\) (Assumption \ref{asm:Convexity_of_xi_p}), then \(\bar{\xi}(t) \leq_e \hat{\xi}(t) ~\left( \bar{\xi}(t) \geq_e \hat{\xi}(t) \right),~  \forall t \geq 0.\)
\end{proposition}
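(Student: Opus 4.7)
The plan is to take expectations of the jump SDE \eqref{eq:Jump_xi_p}, invoke Jensen's inequality on the concave (convex) right-hand side, and close the argument with a vector-valued differential comparison principle. Since the jump integrand $g_s(\xi(t^-), u, t)$ is predictable and $N_s$ has intensity $\lambda_s(t)$, the compensated jump process is a martingale. Under the mean-square integrability assumption, taking expectations of \eqref{eq:Jump_xi_p} gives
\[
\frac{d\bar{\xi}_p}{dt} = \mathbb{E}[f_p(\xi,u,t)] + \sum_{s=1}^S \lambda_s(t)\, \mathbb{E}[g_s(\xi,u,t)].
\]
Since $\xi_u$ solves the deterministic ODE \eqref{eq:xi_u_ode} independently of the measurement process, $\bar{\xi}_u(t) = \xi_u(t) = \hat{\xi}_u(t)$ exactly, so only the $\xi_p$ coordinate requires argument.

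Next, treating $\xi_u$ as a deterministic parameter and applying Jensen's inequality coordinate-wise (or in the product order) to the concave (resp.\ convex) maps $\xi_p \mapsto f_p, g_s$ yields $\mathbb{E}[f_p(\xi,u,t)] \leq_e f_p(\bar{\xi},u,t)$ and $\mathbb{E}[g_s(\xi,u,t)] \leq_e g_s(\bar{\xi},u,t)$; non-negativity of $\lambda_s$ preserves these inequalities under the sum. Hence $\bar{\xi}_p$ is a subsolution of the differential inequality
\[
\frac{d\bar{\xi}_p}{dt} \leq_e f_p(\bar{\xi},u,t) + \sum_{s=1}^S \lambda_s(t)\, g_s(\bar{\xi},u,t),
\]
while $\hat{\xi}_p$ solves the corresponding equation with equality and matching initial data $\hat{\xi}_p(0) = \bar{\xi}_p(0) = \xi_{p,0}$.

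Finally I would invoke a Kamke–Müller type comparison theorem for ODEs in $\mathbb{R}^{n_p}$: a subsolution stays below the solution in the elementwise order provided the right-hand side is quasi-monotone increasing and Lipschitz in $\zeta$. This yields $\bar{\xi}_p(t) \leq_e \hat{\xi}_p(t)$, which combined with $\bar{\xi}_u = \hat{\xi}_u$ gives $\bar{\xi}(t) \leq_e \hat{\xi}(t)$ for all $t\geq 0$. The convex case follows symmetrically by applying the same argument to $-f_p$ and $-g_s$, and the product-order case is handled analogously by comparing component products.

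The main obstacle I expect is justifying the vector comparison when $n_p>1$: Jensen's inequality alone does not suffice, since the comparison theorem also requires quasi-monotonicity of the combined drift $\zeta \mapsto f_p(\zeta,\xi_u,u,t)+\sum_s \lambda_s g_s(\zeta,\xi_u,u,t)$. For the affine class $\alpha(\xi_u,u,t)\xi_p + \beta(\xi_u,u,t)$ highlighted after Assumption \ref{asm:Convexity_of_xi_p} this reduces to a Metzler-type sign condition on the off-diagonal entries of $\alpha$, which is straightforward to verify; in the general case one must either restrict to coordinate-wise decoupled drifts (so scalar comparison applies componentwise) or fold a quasi-monotonicity requirement into Assumption \ref{asm:Convexity_of_xi_p}. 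The Lipschitz continuity already imposed on $f_p,g_s$ takes care of existence, uniqueness, and the classical comparison step once quasi-monotonicity is in hand.
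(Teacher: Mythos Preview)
Your approach is essentially identical to the paper's: take expectations of the jump SDE (the paper does this via conditional infinitesimals and the tower property, you via the compensator/martingale argument), apply Jensen's inequality under Assumption~\ref{asm:Convexity_of_xi_p} to obtain the differential inequality for $\bar{\xi}_p$, and then invoke an ODE comparison theorem with matching initial data to conclude $\bar{\xi}_p\le_e\hat{\xi}_p$.

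The obstacle you flag---that a vector comparison principle of Kamke--M\"uller type requires quasi-monotonicity of $\zeta\mapsto f_p+\sum_s\lambda_s g_s$ in addition to concavity---is a fair observation, and the paper's proof does not address it either: it simply cites scalar/vector comparison theorems \cite{mcnabb1986comparison,budincevic2010comparison} without verifying the off-diagonal monotonicity condition. In the paper's worked examples the perturbed coordinates are either scalar or componentwise decoupled, so the issue does not bite there, but your instinct that a general $n_p>1$ statement needs an extra structural hypothesis (Metzler-type coupling, or coordinate decoupling) is well placed rather than a gap in your own argument relative to the paper's.
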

The proofs are provided in Appendix \ref{sec:bounds_proofs}. 
\begin{remark}
If \(A,\sigma,C_1,\dots,C_S\), and \(R_1,\dots,R_S\) do not depend on \(\xi_p\), then \(\mathbb{E}\left[\Sigma(t)\right]=\mathbb{E}\left[\Sigma(t)\mid \xi(t)=\xi^*(t)\right].\) 
Moreover, if \(f_p,g_1,\dots,g_S\) are affine in \(\xi_p\), then \(\bar{\xi}_p(t)=\hat{\xi}_{p}(t)\).
\end{remark} 
\section{The Optimal Control Problem}
This section will utilize the bounds from Proposition \ref{prop:CovarianceBound} and Proposition \ref{prop:AuxStateBound} to formulate a general optimal control problem to calculate $u$ and the rates \(\lambda_1,\dots,\lambda_S\) for a horizon $[0,T]$. Consider that we have a continuously differentiable running cost \(\mathcal{L}: \mathbb{R}^{n_\xi}\times \mathbb{S}^n_{>0}\times \mathcal{U}\times \mathbb{R}^{S}_{\geq0} \to \mathbb{R}\) with a continuously differentiable terminal cost \(\mathcal{L}_T: \mathbb{R}^{n_\xi}\times \mathbb{S}^n_{>0}\times \mathcal{U}\times \mathbb{R}^{S}_{\geq0} \to \mathbb{R}\) which we desire to minimize (e.g., minimizing \(\mathrm{tr}(\hat{\Sigma})\)). In addition to the costs, consider that we have \(n_{c_r}\) continuously differentiable running constraints \(\mathcal{C}: \mathbb{R}^{n_\xi}\times \mathbb{S}^n_{>0}\times \mathcal{U}\times \mathbb{R}^{S}_{\geq0} \to \mathbb{R}^{n_{c_r}}\) together with \(n_{c_{T}}\) continuously differentiable terminal constraints \(\mathcal{C}_T: \mathbb{R}^{n_\xi}\times \mathbb{S}^n_{>0}\times \mathcal{U}\times \mathbb{R}^{S}_{\geq0} \to \mathbb{R}^{n_{c_{T}}}\) (e.g, \(\xi_p(T)\geq 0\) if \(\xi_p(T)\) represent energy).
We formulate the following general optimal control problem:
\begin{subequations}
\label{eq:OCP}
\begin{align}
&\underset{\substack{\lambda \geq_e 0,\; u \in \mathcal{U},\\ \hat{\xi} \in \mathbb{R}^{n_\xi},\; \hat{\Sigma} \in \mathbb{S}^n_{>0}}}{\mathrm{min}} 
\begin{multlined}[t]
    \int_0^T \mathcal{L}\left(\hat{\xi},\hat{\Sigma},u,\lambda\right)dt \\ 
    + \mathcal{L}_T\left(\hat{\xi}(T),\hat{\Sigma}(T),u(T),\lambda(T)\right)
\end{multlined} \\
&\text{subject to} \nonumber \\
&\begin{multlined}[t]
  \frac{d\hat{\Sigma}}{dt} = A(\hat{\xi},t)\hat{\Sigma} + \hat{\Sigma}A^\top(\hat{\xi},t) + \sigma(\hat{\xi},t)\sigma^\top(\hat{\xi},t) \\
  - \sum_{s=1}^S \lambda_s(t) K_s(\hat{\xi},t)C_s(\hat{\xi},t)\hat{\Sigma},%
\end{multlined} \\
&\frac{d\hat{\xi}_p}{dt}= f_p(\hat{\xi},u,t) + \sum^S_{s=1}\lambda_s(t)g_s(\hat{\xi},u,t), \\
&\frac{d\xi_u}{dt} = f_u(\xi_u,u,t), \\
\intertext{}
&\hat{\xi}(0)=\xi_0, \hat{\Sigma}(0)=\Sigma_0, \\
&\lambda(t)\geq_e 0, ~u(t)\in\mathcal{U},~ \forall t\in[0,T], \\
\intertext{}
&\mathcal{C}\left(\hat{\xi}(t),\hat{\Sigma}(t),u(t),\lambda(t)\right)\leq_e 0, ~\forall t\in[0,T], \\
&\mathcal{C}_T\left(\hat{\xi}(T),\hat{\Sigma}(T),u(T),\lambda(T)\right)\leq_e 0,
\end{align}
\end{subequations}
where \(\mathcal{U}\subset\mathbb{R}^{m_{\xi}}\).
Since \(\hat{\Sigma}(t)\) and \(\hat{\xi}_p(t)\) represent bounds on the true mean covariance and mean perturbed state, the cost functions \(\mathcal{L}\), \(\mathcal{L}_T\) and the constraints \(\mathcal{C}\), \(\mathcal{C}_T\) are assumed to be monotonically non-increasing (in the Loewner order) in \(\hat{\Sigma}\), and monotonically non-decreasing (respectively, non-increasing) in \(\hat{\xi}_p\) if \(f_p\) and \(g_1,\dots,g_S\) are concave (respectively, convex) in \(\xi_p\) (element-wise)\footnote{We say for a differentiable function \(f:\mathbb{R}^n\to\mathbb{R}^m\) that it is non-decreasing (non-increasing) if \(\partial f_i/\partial {x_j}\geq 0\) (\(\partial f_i/\partial {x_j}\leq 0\)) for \(i\in\{1,\dots,m\}\) and \(j\in\{1,\dots,n\}.\)}.
This ensures that the bounds from Propositions~\ref{prop:CovarianceBound} and~\ref{prop:AuxStateBound} provide meaningful constraints and costs.
When \(f_p\) and \(g_s\) are \emph{affine} in \(\xi_p\), we have \(\bar{\xi}_p(t)=\hat{\xi}_p(t)\) for all \(t\geq 0\), so no additional monotonicity restrictions are required with respect to \(\hat{\xi}_p\).
Since we typically want to decrease or upper bound uncertainty, the requirement for monotonically increasing costs and constraints with respect to \(\hat{\Sigma}\) is not restrictive.
Note that one can also add slack variables to relax constraints and other optimization-related variables to \eqref{eq:OCP}.
Furthermore, we assume the following:
\begin{assumption}[Feasibility]
    There exists at least one admissible pair \((u(\cdot),\lambda(\cdot))\) with a corresponding \(\hat{\Sigma}(\cdot)\) and \(\xi(\cdot)\) such that \(\mathcal{C}\leq_e 0\) and \(\mathcal{C}_T\leq_e 0\) are satisfied.
\end{assumption}

Obtaining a closed-form solution to the problem in \eqref{eq:OCP} is generally intractable. Thankfully, several numerical approaches can be utilized to obtain an approximate solution to the problem \cite{rao2009surveyOCP} (see Appendix \ref{sec:num_OCP} for a summary of these methods and the choice for the examples in this paper).

\begin{remark}
\label{remark:OCP}
    For the OCP in \eqref{eq:OCP}, we use \(\hat{\xi}\) from Proposition \ref{prop:AuxStateBound} in the cost, in the constraints, and also for the dynamics of \(\hat{\Sigma}\). By Proposition \ref{prop:CovarianceBound}, \(\hat{\Sigma}\) then upper-bounds the true covariance $\bar{\Sigma}(t;\hat{\xi})$. We distinguish three cases for our setup and the OCP in \eqref{eq:OCP}. The first case is when \(f_p,g_1,\dots,g_S\) are affine in \(\xi_p\). In this situation, we have \(\hat{\xi}=\bar{\xi}\) and the constraints and the costs in the OCP in \eqref{eq:OCP} are for the exact mean trajectory \(\bar{\xi}\). Additionally, \(\hat{\Sigma}\) will be an upper-bound for the covariance matrix along the mean trajectory \(\bar{\xi}\), and the costs and the constraints in the OCP will thus target the mean behaviour. If $f_p,g_1,\dots,g_S$ are nonlinear but are concave (respectively, convex) in \(\xi_p\) (Assumption \ref{asm:Convexity_of_xi_p}), then $\hat{\xi}$ upper-bounds (respectively, lower-bounds) $\bar{\xi}$, so any monotonically non-decreasing (respectively, non-increasing)  cost or constraint depending only on $\hat{\xi}$ remains meaningful. However, without further convexity/concavity assumptions on $A,\sigma,C_1,\dots,C_S$ and $R_1,\dots,R_S$ from \eqref{eq:C_SSM_aux}, it need not follow that $\hat{\Sigma}=\bar{\Sigma}(t;\hat{\xi})$ bounds $\bar{\Sigma}(t;\bar{\xi})$. Nevertheless, if the dynamics are locally nearly linear (similar to the EKF mean approximation $\frac{d\bar{\xi}}{dt}\approx f_p(\bar{\xi},u,t)+\sum_{s=1}^{N_s}\lambda_s(t)g_s(\bar{\xi},u,t)$ \cite{simon2006optimal,sarkka2023bayesian}), one finds $\hat{\Sigma}\approx\bar{\Sigma}(t;\bar{\xi})$, and thus, still approximately targets the mean behavior. 
    Finally, in the fully nonlinear case violating Assumption \ref{asm:Convexity_of_xi_p}, we can still treat $\hat{\xi}$ as a heuristic approximation for $\bar{\xi}$ without formal guarantees if the dynamics are locally nearly linear. Empirical results in Appendix \ref{Appendix:Exp_nonconvex} show that this approximation can yield acceptable performance in many scenarios.
\end{remark}

\begin{remark}
    While this paper focuses on an optimal control setup, Proposition \ref{prop:CovarianceBound} and Proposition \ref{prop:AuxStateBound} offer the opportunity to attempt different control strategies to compute \(\lambda\) and \(u\) depending on the specific problem of interest. For example, we can use control barrier function techniques \cite{ames2019control} to find a feedback law for \(\lambda\) and \(u\) such that the set
\(
\mathcal{S}_{c} := \left\{\,\Sigma \in \mathbb{S}_{>0}^{n} \;\middle|\; \mathrm{tr}(\Sigma)\leq    c \right\}\) with \(c>0\) is 
forward-invariant. 
\end{remark}
\section{Deterministic Selection of Measurement Times}
After solving the OCP in \eqref{eq:OCP} for the measurement rates \(\lambda\) and inputs \(u\), we would like to select the measurement times for each sensor \(s\in \{1,\dots,S\}\) within the horizon \([0,T]\) which are, loosely speaking, closely related to the average behavior of a Poisson process with rate \(\lambda_s\). Relying on sampling the Poisson process to select measurement times is not suitable for a finite-horizon planning task as we could sample a realization which is far from the average behaviour of the process. While this realization could provide a better solution, in terms of the cost and constraints in \eqref{eq:OCP}, than the average one, it could also provide a worse solution. Therefore, we aim to \textit{deterministically} select the measurement times to be as close as possible to the average behavior according to a specific metric. 
To select deterministic measurement points \(\bar{t}^s=(\bar{t}^s_1, \dots, \bar{t}^s_{n_s})\) for sensor \(s\) with intensity rate \(\lambda_s(t)\) over \([0,T]\), we cast the problem as one of optimal quantization. Define the normalized intensity measure
\[
\mu_s(dt) \;=\; \frac{\lambda_s(t)}{\Lambda_s(T)} \, dt,
\quad \text{where} \quad 
\Lambda_s(T) \;=\; \int_{0}^{T} \lambda_s(t)\,dt,
\]
and let \(\tau_s \sim \mu_s\) be the random variable that describes the time of a measurement from sensor $s$. Our goal is then to approximate \(\mu_s\) by a discrete measure $\nu_s=\nu_{s,\bar{t}^s}$ given by:
\[
\nu_s(dt) \;=\; \frac{1}{n_s} \sum_{i=1}^{n_s} \delta_{\bar{t}^s_i}(dt).
\]
Denote \(\tau^d_s\sim \nu_s\) to be the discrete random variable representing the time of measurement of sensor \(s\) under \(\nu_s\).
We select \(\bar{t}_s\) which minimizes the squared Wasserstein-2 distance,
\[
W_{2}^{2}(\mu_s,\nu_s) 
\;=\; 
\inf_{\gamma \in \Gamma(\mu_s,\nu_s)} 
\int_{[0,T]^2} \lvert t - t'\rvert^2 \, d\gamma(t,t'),
\]
where \(\Gamma(\mu_s,\nu_s)\) is the set of all couplings of \(\mu_s\) and \(\nu_s\). In one dimension, this distance simplifies to:
\[
W_2^2(\mu_s,\nu_s) 
\;=\; 
\int_0^1 \left(F_s^{-1}(p) - G^{-1}_s(p)\right)^2 \, dp,
\]
where \(F_s\) and \(G_s\) are the cumulative distribution functions (CDFs) of \(\mu_s\) and \(\nu_s\), respectively \cite{villani2009optimal}. In this case, we have \(
F_s(t) = \int_{0}^{t} \frac{\lambda_s(t)}{\Lambda_s(T)} \, dt = \frac{\Lambda_s(t)}{\Lambda_s(T)}\) and \(G_s(t) = \frac{1}{n_s} \sum_{i=1}^{n_s} \mathbf{1}_{[\bar{t}_i^s,\infty) }(t)\).

\begin{proposition}[Optimal Quantization Points]
\label{prop:optimal-quantization}
The points \(\bar{t}^s_1,\dots,\bar{t}^s_{n_s}\) that minimize \(W_2^2(\mu_s,\nu_s)\) are the conditional centroids on \(n_s\) intervals of equal probability under \(\mu_s\). Specifically,
\begin{equation}
\label{eq:times_rule}
\bar{t}^s_i 
\;=\; 
\mathbb{E}\left[\tau_s \,\mid\, \tau_s \in [a^s_{i-1}, a^s_i]\right]
\;=\;
\frac{\int_{a^s_{i-1}}^{a^s_i} t \,\lambda_s(t)\,dt}{\int_{a^s_{i-1}}^{a^s_i} \lambda_s(t)\,dt},
\end{equation}
where \(a^s_i = F_s^{-1}\!\left(\tfrac{i}{n_s}\right)\) for \(i=1,\dots,n_s-1\), \(a^s_0=0\), and \(a^s_{n_s}=T\). With this construction, \(\nu_s\) matches the first moment of \(\mu_s\) (i.e.\ \(\mathbb{E}[\tau_s] = \mathbb{E}[\tau^d_s]\)) and the variance difference is \(\mathrm{Var}(\tau_s)-\mathrm{Var}(\tau^d_s) = W_2^2(\mu_s,\nu_s)\).
\end{proposition}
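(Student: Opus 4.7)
My plan is to work entirely with the one-dimensional inverse-CDF representation of $W_2^2$ given in the excerpt, reducing the problem to a separable quadratic minimization over the $n_s$ atom locations. Since $\nu_s$ is invariant under relabelings of its atoms, I would first assume without loss of generality that the points are sorted in increasing order, so that the quantile function satisfies $G_s^{-1}(p) = \bar{t}^s_i$ on the interval $((i-1)/n_s,\; i/n_s]$. Substituting into the one-dimensional Wasserstein formula stated in the text yields
\begin{equation*}
W_2^2(\mu_s,\nu_s) \;=\; \sum_{i=1}^{n_s} \int_{(i-1)/n_s}^{i/n_s} \bigl(F_s^{-1}(p) - \bar{t}^s_i\bigr)^2\,dp,
\end{equation*}
which is separable in the atoms: the $i$-th summand is a scalar quadratic in $\bar{t}^s_i$ whose unique minimizer is the mean $\bar{t}^s_i = n_s \int_{(i-1)/n_s}^{i/n_s} F_s^{-1}(p)\,dp$.

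Next, the change of variables $t = F_s^{-1}(p)$, giving $dp = \lambda_s(t)/\Lambda_s(T)\,dt$ with endpoints $a^s_{i-1}$ and $a^s_i$, translates this minimizer into the conditional-expectation form \eqref{eq:times_rule}. Since $F_s^{-1}$ is non-decreasing, the resulting minimizers automatically satisfy $\bar{t}^s_1 \le \cdots \le \bar{t}^s_{n_s}$, which justifies the sorting reduction a posteriori.

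For the moment-matching claims, the key identity is $\bar{t}^s_i/n_s = \int_{(i-1)/n_s}^{i/n_s} F_s^{-1}(p)\,dp$ just derived. Summing over $i$ gives $\mathbb{E}[\tau^d_s] = (1/n_s)\sum_i \bar{t}^s_i = \int_0^1 F_s^{-1}(p)\,dp = \mathbb{E}[\tau_s]$. For the variance identity I would expand the square in the expression above for $W_2^2$, use the same identity to collapse the cross term to $-2\sum_i (\bar{t}^s_i)^2/n_s$, and recognize the remaining pieces as $\mathbb{E}[\tau_s^2] - \mathbb{E}[(\tau^d_s)^2]$. Combined with the matched means, this gives $W_2^2(\mu_s,\nu_s) = \mathrm{Var}(\tau_s) - \mathrm{Var}(\tau^d_s)$.

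The only mildly careful point is the ordering reduction at the start; beyond that the proof is essentially a separable scalar optimization followed by a single change of variables, with the 1D Wasserstein formula doing all the heavy lifting. I do not anticipate a genuine obstacle, since the statement is the one-dimensional analogue of the standard $L^2$-quantization optimality conditions, specialised to the case of prescribed equal weights $1/n_s$.
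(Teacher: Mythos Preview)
Your proposal is correct and follows essentially the same route as the paper's proof: both use the one-dimensional quantile formula for $W_2^2$, split the integral over the $n_s$ equal-probability intervals, and minimize each scalar quadratic to obtain the centroid formula, then deduce the moment identities. Your treatment is in fact slightly more careful than the paper's, since you explicitly justify the sorting reduction and spell out the expansion behind the variance identity, whereas the paper leaves both points implicit.
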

The proof can be found in Appendix \ref{sec:quantization_proof}. 
Note that \(\Lambda_s(a^s_i)=\Lambda_s\left(F_s^{-1}(\frac{i}{n_s})\right)=\Lambda_s\left(\Lambda^{-1}_s\left(\Lambda_s(T)\frac{i}{n_s}\right)\right)=\frac{\Lambda_s(T)i}{n_s}\) for all \(i\in\{1,\dots,n_s-1\}\). In other words, in Propositions \ref{prop:optimal-quantization}, we divide \(\Lambda_s(T)\) into \(n_s\) equal parts.
This strategy distributes points more densely in regions where \(\lambda_s(t)\) is large while preserving the mean of \(\mu_s\) and minimizing second-order distortion in the Wasserstein-2 sense. The solution is unique and admits a closed-form expression, making it computationally efficient. Furthermore, minimizing the Wasserstein distance is equivalent to minimizing the quantization error \(\mathbb{E}[\mathrm{min}_{t\in\bar{t}_s}||\tau_s-t||^2]\), which is a classic result from optimal quantization theory for probability distributions \cite{graf2000foundations}. To match the average number of measurements, we select $n_s=\lfloor\Lambda_s(T)+0.5\rfloor$.

To clarify Proposition \ref{prop:optimal-quantization}, assume for simplicity that \(\Lambda_s(T)\) is an integer, then according to Proposition \ref{prop:optimal-quantization}, we will partition \([0,T]\) into intervals such that \(\Lambda_s(a^s_i)-\Lambda_s(a^s_{i-1})=1\) for \(i\in\{1,\dots,n_s\}\) if \(n_s=\Lambda_s(T)\). This means that we partition \([0,T]\) into intervals wherein each one of them contains exactly one measurement on average. The location of that one measurement in each interval is taken to be the mean of \(\frac{\lambda_s(t)}{\Lambda_s(a^s_i)-\Lambda_s(a^s_{i-1})}=\lambda_s(t)\) over that interval. We select the measurement times for each sensor \(s\in\{1,\dots,S\}\) based on Proposition \ref{prop:optimal-quantization} independently. 

\section{Experiments}
\label{sec:Experiments}
\begin{figure*}[t!]
\centering
\subfigure[]{\label{eq:figa}
    \includegraphics[width=0.6\textwidth]{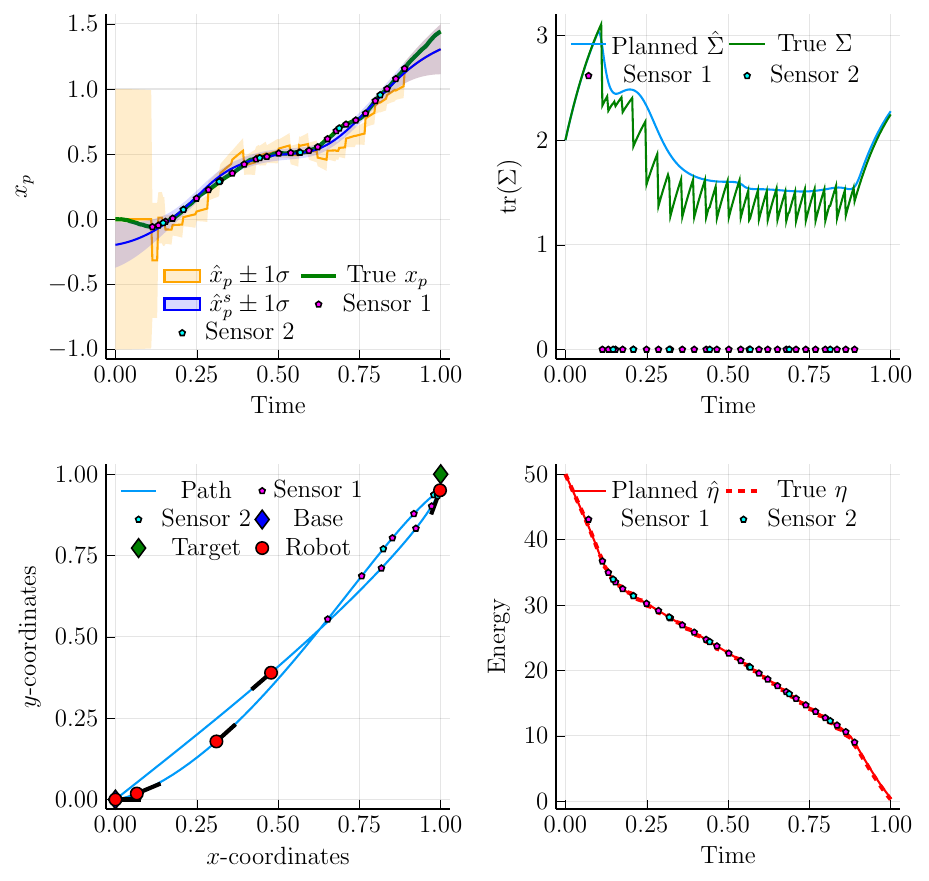}}
\subfigure[]{\label{eq:figb}
    \includegraphics[width=0.6\textwidth]{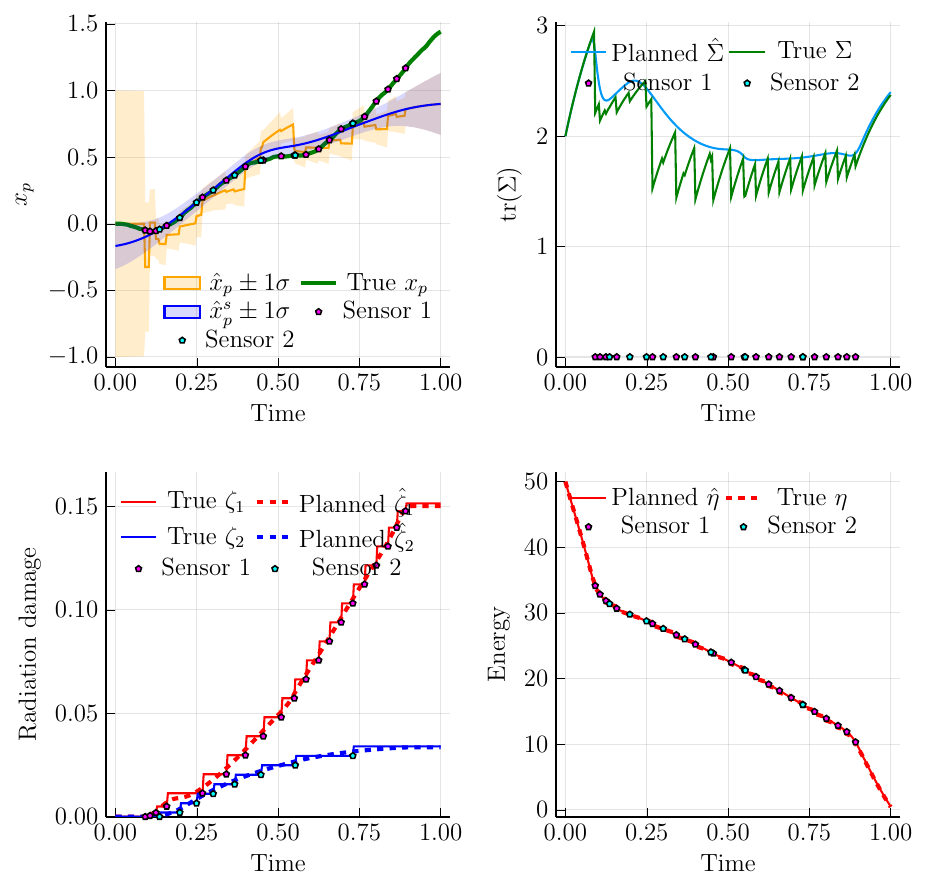}}
    \caption{The planned proposed \textit{mean} solution according to \eqref{eq:OCP} and the true simulated solution based on the proposed measurement times selected according to \eqref{eq:times_rule} from the calculated rates \(\lambda\) for (a) without radiation damage and (b) with radiation damage. \(x_p\) represents the true process, \(\hat{x}_p\) represents the KF estimate of it, and \(\hat{x}_p^s\) represents the RTS estimate of it (the GP regression output).}
    \label{fig:robotgp_rad}
\end{figure*}
In many real-world applications (e.g., environmental monitoring \cite{alvear2017using, bird2018robot}, and search-and-rescue \cite{waharte2010supporting}), robots must operate under limited energy resources while gathering data in potentially hazardous environments. We consider such a scenario where a robot, constrained by its energy capacity, needs to collect measurements (e.g., pollutant concentration, air quality, temperature) using two onboard sensors for temporal GP regression \footnote{The code for generating the results, in addition to the corresponding animations and different examples, can be found on \url{https://github.com/MOHAMMADZAHD93/When2measureKF}}.

\subsection{Robot Model and Problem Setup}
The state \(\xi_u =[p_r,\theta_r]^\top\in\mathbb{R}^3\) represents the planner robot’s position \(p_r\in\mathbb{R}^2\) and its heading \(\theta_r\in\mathbb{R}\). It evolves according to a unicycle model (see Appendix~\ref{eq:uni_model} for details). The input \(u=[v,\omega]^\top\in\mathbb{R}^2\) consists of the heading velocity \(v\) and the angular velocity \(\omega\), both subject to physical bounds.
The energy state \(\xi_p=\eta\in\mathbb{R}\) evolves according to
\begin{equation}
\label{eq:energy}
\frac{d\eta}{dt}
= c_e \exp\bigl(-r_e \|p_r - p_e\|^2\bigr) 
  - c_u v 
  - c_u \omega 
  - \sum_{s=1}^2 \sum_{i=1}^{N_s} c_s \,\delta_{t^s_i},
\end{equation}
where \(c_u \geq c_1 \geq c_2 \geq 0\) are the energy costs of using the inputs and the sensors. The term \(c_e\exp\left(-r_e\|p_r - p_b\|^2\right)\) (with \(c_e,r_e\geq 0\)) represents an energy charging source (e.g., solar) with its maximum charging rate at the base located at \(p_b\in\mathbb{R}^2\).

The measurement noise covariance matrices depend on the robot–process distance:
\begin{equation}
\label{eq:meas}
R_s(p_r) 
= R_{s_{\mathrm{max}}} \exp\!\bigl(r_s \|p_r - p_p\|^2\bigr),
\quad s\in\{1,2\},
\end{equation}
where \(R_{2_{\mathrm{max}}}>R_{1_{\mathrm{max}}}>0\), \(r_2 > r_1 > 0\), and \(p_p \in \mathbb{R}^2\) is the location of the process. This setup implies that sensor~1 is more accurate but also consumes more energy than sensor~2.
Starting at the base \(p_b\) with initial energy \(\xi_p(0)=\xi_0\geq 0\), the robot must collect measurements at \(p_p\) and return to \(p_b\) by time \(T \geq 0\) while ensuring that there is sufficient residual energy to continue operating. We employ the state-space representation of the Matérn kernel (degree 1) for the GPs \citep{sarkka2012infinite,GP_SP_TODESCATO2020109032}. Kalman filtering and smoothing (Rauch-Tung-Striebel (RTS)) with this state-space representation is known to be equivalent to GP regression (see Appendix \ref{sec:SSM-GP} for more details). The objective is to reduce uncertainties in the CD-KF estimate; hence, schedule measurements to minimize \(\mathrm{tr}(\Sigma)\), where \(\Sigma\) is the Kalman filter’s covariance matrix for the estimated process (an upper bound on the smoothing covariance).
In the OCP in \eqref{eq:OCP}, the running cost is
\begin{equation}
\mathcal{L}(\hat{\xi}, \hat{\Sigma}, u, \lambda, \varepsilon)
= w_{\Sigma}\,\mathrm{tr}(\hat{\Sigma})
  + w_{\lambda}\,\lambda^\top \lambda
  + w_{u}\,u^\top u
  + w_{\varepsilon}\,\varepsilon^2,
\end{equation}
where \(w_{\varepsilon} \gg w_{\Sigma} \geq w_{\lambda} \geq w_{u} \geq 0\), and the terminal cost is zero. To avoid large fluctuations in \(\mathrm{tr}(\hat{\Sigma})\), we add the constraint \(\mathrm{tr}(\hat{\Sigma}(t)) \leq c_{\Sigma} + \varepsilon(t)\) for \(t \in [1/2, T]\), where \(\varepsilon(t) \geq 0\) is a slack variable to ensure feasibility. We enforced the constraint on \(\hat{\Sigma}\) only for \(t\in[1/2,T]\) since requiring the trace of \(\hat{\Sigma}\) to decrease rapidly can be restrictive and will always lead to large \(\varepsilon\) at the beginning of the horizon. The running constraints \(\mathcal{C}\) thus encodes the input bounds \(\underline{u} \leq_e u \leq_e \bar{u}\), \(\lambda\geq_e 0\), the trace constraint \(\mathrm{tr}(\hat{\Sigma})\leq c_\Sigma+\varepsilon\) (enforced for \(t \in [1/2, T]\)), \(\varepsilon \geq 0\), and an energy lower bound \(\hat{\eta} \geq c_\eta\) with \(c_\eta \geq 0\). The terminal constraint \(\mathcal{C}_T\) encodes that the robot should return to the base at the end of the horizon \(p_r(T) = p_b\). Note that the dynamics of \(\Sigma\) do not depend on the auxiliary state \(\xi_p\). The results of the optimization and simulation for a horizon of \(T=1\) are illustrated in Figure~\ref{eq:figa}.

\subsection{Radioactive Environment Extension}
We next examine a setting inspired by nuclear or chemical disaster scenarios, where the process of interest is located in a dangerously radioactive zone \cite{bird2018robot}. Each measurement in this area degrades the sensors’ accuracy over time. Specifically, the auxiliary state is now \(\xi_p = [\eta \ \zeta_1 \ \zeta_2]^\top \in\mathbb{R}^3\), where \(\zeta_1\) and \(\zeta_2\) denote the accumulated radiation damage on Sensors~1 and~2, respectively. The damage dynamics are
\begin{equation}
\label{eq:xis}
\frac{d\zeta_s}{dt} 
= \sum_{i=1}^{N_s} \gamma_i \exp\!\left(-r_\zeta \|p_r - p_p\|^2\right)\,\delta_{t^s_i},
\end{equation}
where \(\gamma_1 > \gamma_2 \geq 0\), implying Sensor~1 is more susceptible to radiation damage than Sensor~2, and \(r_\zeta>0\). Consequently, the sensor covariance matrices become
\(R_s(\xi)
= R_{s_{\mathrm{max}}}\,\exp\!\left(r_s \|p_r - p_p\|^2\right)
         \,\exp\!\left(r_s \zeta_s\right),
~ s \in \{1,2\}.\)
Radiation damage thus exponentially increases the measurement noise of each sensor. Figure~\ref{eq:figb} presents results for this second scenario.
\subsection{Comparisions}
The approach used for the example jointly optimizes scheduling sensors while optimizing other inputs for the auxiliary dynamics. To our knowledge, there exist no methods that jointly optimize inputs for auxiliary dynamics together with scheduling sensors in a continuous-discrete setup. Therefore, we compare our approach with baseline strategies that utilize different measurement schemes while employing the optimized inputs for the auxiliary states.
The baseline scheduling strategies are summarized as follows:
\begin{itemize}
    \item Random: Discretize the horizon uniformly according to the number of discretization points for the OCP (\(N_O\)). Then, sample the measurement times for each sensor \(s\in\{1,\dots,S\}\) according to a Poisson process with constant rate \(\lambda_s=N_O/S\).
    \item Greedy: Discretize the horizon into small time steps. Then, at each time step and for each sensor, compute a “score” equal to the immediate reduction in the trace of the predicted covariance minus the associated sensor costs and penalties on constraint violations. The measurement is then chosen according to the sensor with the highest positive score. If all the scores are negative, then we do not perform a measurement.
    \item M-Optimized: Instead of finding the measurement times according to the quantization rule in \eqref{eq:times_rule}, we sample multiple realizations of measurement times according to the corresponding Poisson process with the optimized rates. Afterwards, we pick the measurement times corresponding to the realization with the minimum value of the cost function modified with penalties on constraint violations.
\end{itemize}
Table \ref{tab:comparisions_robot} summarize the statistics for the comparisons where "Optimized" refers to our suggested approach. 
\begin{table}[ht!]
\centering
\caption{Trajectory Statistics for the Covariance Trace, Energy, and Degradation Over the Horizon. The proposed approach in this paper is denoted as "Optimized".}
\begin{tabular}{lccc}
\toprule
\multicolumn{4}{c}{\textbf{Covariance Trace}} \\
\midrule
Method       & Mean    & Std      & Maximum \\
\midrule
Optimized    & 1.90221 & 0.3406   & 2.94496 \\
M-Optimized  & 1.92461 & 0.364315 & 3.00148 \\
Greedy       & 2.60768 & 0.487677 & 3.21866 \\
Random       & 2.2275  & 0.469839 & 2.9206  \\
\midrule
\multicolumn{4}{c}{\textbf{Energy \(\eta\)}} \\
\midrule
Method       & Mean     & Std     & Maximum \\
\midrule
Optimized    & 21.5435  & 10.1709 & 50.0    \\
M-Optimized  & 22.3106  & 10.1279 & 50.0    \\
Greedy       & -1.67343 & 14.0586 & 50.0    \\
Random       & -17.4417 & 31.3537 & 50.0    \\
\midrule
\multicolumn{4}{c}{\textbf{Degradation ($\zeta_1+\zeta_2$)}} \\
\midrule
Method       & Mean      & Std       & Maximum  \\
\midrule
Optimized    & 0.0905529 & 0.0654335 & 0.185626 \\
M-Optimized  & 0.0845751 & 0.0688232 & 0.195288 \\
Greedy       & 0.193289  & 0.081316  & 0.232247 \\
Random       & 0.792406  & 0.557986  & 1.57578  \\
\bottomrule
\end{tabular}
\label{tab:comparisions_robot}
\end{table}
\subsection{Discussion of the Results}
From Figure~\ref{fig:robotgp_rad}, we observe that solving the optimal control problem in both scenarios yields a planned robot trajectory together with a corresponding measurement schedule for both of the sensors to track the measured process. We also remark from the figure that the planned upper bound \(\hat{\Sigma}\) on the mean for \(\Sigma\) nicely tracks the simulated true solutions for \(\Sigma\) with measurement times calculated according to Proposition \ref{prop:optimal-quantization}. We observe the same for \(\eta,\zeta_1,\) and \(\zeta_2\). Additionally, we note that the trace of \(\Sigma\) for the radioactive case is on average higher than that for the non-radioactive case and the smoothing results (GP regression results) are visually worse for the radioactive case. This is expected since the radioactive environment case is more challenging. The results in Table \ref{tab:comparisions_robot} suggest that our method ("Optimized") outperforms the greedy and random scheduling approaches for the robot example. The results also demonstrate that our deterministic quantization provides similar results to "M-Optimized" without having to sample multiple realizations. Sampling multiple realizations can be computationally expensive and unrealistic since we do not have the real measurements to compute the corresponding cost for each realization.
Overall, the results demonstrate the potential and feasibility of utilizing the proposed approach in the CD-KF setup with auxiliary dynamics for sensor scheduling and selection. See Appendix \ref{sec:water} for another example involving water quality monitoring with fouling and active defouling. In addition, an example with spacecraft monitoring targets on Earth can be found in Appendix \ref{sec:Sat-space}. Moreover, Appendix \ref{sec:RH-framework} provides a discussion on how our approach can be generalized to uncertain and nonlinear SSMs with an example of a robot estimating the position and velocity of a moving target while attempting to track it. Finally, in Appendix \ref{Appendix:Exp_nonconvex} we provide examples for cases where the auxiliary dynamics do not satisfy Assumption \ref{asm:Convexity_of_xi_p}.

\section{Conclusion}
We propose a novel methodology for sensor scheduling in SSMs that incorporates auxiliary state-space dynamics. By modeling the occurrence of sensor measurements through independent inhomogeneous Poisson processes with sensor-specific rates, we derive a continuously differentiable upper bound on the mean posterior covariance matrix of the CD-KF conditioned on the mean auxiliary state. This differentiability enables efficient gradient-based optimization of the sensor measurement rates. In addition, we formulate a finite-horizon optimal control problem that jointly optimizes these measurement rates with inputs for auxiliary dynamics.
We further provide a quantization-based method for selecting deterministic sensor measurement times that closely match the mean behavior of Poisson processes with optimized measurement rates, thereby matching the actual sensor schedule with the intended resource–accuracy trade-offs. An Empirical evaluation in state-space filtering and dynamic GP regression confirms the feasibility and potential of this approach.

\section*{Acknowledgements}
This project is supported by the Pioneer Centre for Artificial Intelligence, Denmark.

\section*{Impact Statement}

This paper presents work whose goal is to advance the field
of machine learning in general and Bayesian filtering in particular. There are many potential societal
consequences of our work, none of which we feel must be
specifically highlighted here.

\bibliography{example_paper}

\begin{thebibliography}{41}
\providecommand{\natexlab}[1]{#1}
\providecommand{\url}[1]{\texttt{#1}}
\expandafter\ifx\csname urlstyle\endcsname\relax
  \providecommand{\doi}[1]{doi: #1}\else
  \providecommand{\doi}{doi: \begingroup \urlstyle{rm}\Url}\fi

\bibitem[Alvear et~al.(2017)Alvear, Zema, Natalizio, and Calafate]{alvear2017using}
Alvear, O., Zema, N.~R., Natalizio, E., and Calafate, C.~T.
\newblock Using uav-based systems to monitor air pollution in areas with poor accessibility.
\newblock \emph{Journal of Advanced Transportation}, 2017\penalty0 (1):\penalty0 8204353, 2017.

\bibitem[Ames et~al.(2019)Ames, Coogan, Egerstedt, Notomista, Sreenath, and Tabuada]{ames2019control}
Ames, A.~D., Coogan, S., Egerstedt, M., Notomista, G., Sreenath, K., and Tabuada, P.
\newblock Control barrier functions: Theory and applications.
\newblock In \emph{2019 18th European Control Conference (ECC)}, pp.\  3420--3431. IEEE, 2019.

\bibitem[Ayan et~al.(2020)Ayan, Gürsu, Hirche, and Kellerer]{ayan2020aoi_finiteHorizon_Networkdyn_LTI_discrete}
Ayan, O., Gürsu, H.~M., Hirche, S., and Kellerer, W.
\newblock Aoi-based finite horizon scheduling for heterogeneous networked control systems.
\newblock In \emph{GLOBECOM 2020: IEEE Global Communications Conference}, pp.\  1--7. IEEE, 2020.

\bibitem[Bemporad \& Morari(2007)Bemporad and Morari]{bemporad2007robust}
Bemporad, A. and Morari, M.
\newblock Robust model predictive control: A survey.
\newblock In \emph{Robustness in Identification and Control}, pp.\  207--226. Springer, 2007.

\bibitem[Bird et~al.(2018)Bird, Griffiths, Martin, Codres, Jones, Stancu, Lennox, Watson, and Poteau]{bird2018robot}
Bird, B., Griffiths, A., Martin, H., Codres, E., Jones, J., Stancu, A., Lennox, B., Watson, S., and Poteau, X.
\newblock A robot to monitor nuclear facilities: Using autonomous radiation-monitoring assistance to reduce risk and cost.
\newblock \emph{IEEE Robotics \& Automation Magazine}, 26\penalty0 (1):\penalty0 35--43, 2018.

\bibitem[Bock \& Plitt(1984)Bock and Plitt]{Direct_collecation}
Bock, H.~G. and Plitt, K.~J.
\newblock A multiple shooting algorithm for direct solution of optimal control problems*.
\newblock \emph{IFAC Proceedings Volumes}, 17\penalty0 (2):\penalty0 1603--1608, 1984.
\newblock ISSN 1474-6670.
\newblock 9th IFAC World Congress: A Bridge Between Control Science and Technology, Budapest, Hungary, July 2--6, 1984.

\bibitem[Budincevic(2010)]{budincevic2010comparison}
Budincevic, M.
\newblock A comparison theorem of differential equations.
\newblock \emph{Novi Sad J. Math}, 40\penalty0 (1):\penalty0 55--56, 2010.

\bibitem[Böhme \& Frank(2017)Böhme and Frank]{OCP_direct}
Böhme, T.~J. and Frank, B.
\newblock \emph{Direct Methods for Optimal Control}, pp.\  233--273.
\newblock Springer International Publishing, Cham, 2017.
\newblock ISBN 978-3-319-51317-1.

\bibitem[Delauney et~al.(2010)Delauney, Comp\`ere, and Lehaitre]{Biofouling}
Delauney, L., Comp\`ere, C., and Lehaitre, M.
\newblock Biofouling protection for marine environmental sensors.
\newblock \emph{Ocean Science}, 6\penalty0 (2):\penalty0 503--511, 2010.

\bibitem[Delgado et~al.(2021)Delgado, Briciu-Burghina, and Regan]{delgado2021antifouling}
Delgado, A., Briciu-Burghina, C., and Regan, F.
\newblock Antifouling strategies for sensors used in water monitoring: review and future perspectives.
\newblock \emph{Sensors}, 21\penalty0 (2):\penalty0 389, 2021.

\bibitem[Dunning et~al.(2017)Dunning, Huchette, and Lubin]{DunningHuchetteLubin2017}
Dunning, I., Huchette, J., and Lubin, M.
\newblock Jump: A modeling language for mathematical optimization.
\newblock \emph{SIAM Review}, 59\penalty0 (2):\penalty0 295--320, 2017.

\bibitem[Graf \& Luschgy(2000)Graf and Luschgy]{graf2000foundations}
Graf, S. and Luschgy, H.
\newblock \emph{Foundations of Quantization for Probability Distributions}.
\newblock Springer Science \& Business Media, 2000.

\bibitem[Hanson(2007)]{hanson2007applied}
Hanson, F.~B.
\newblock \emph{Applied Stochastic Processes and Control for Jump-Diffusions: Modeling, Analysis and Computation}.
\newblock SIAM, 2007.

\bibitem[Hartikainen \& Särkkä(2010)Hartikainen and Särkkä]{SIMO_KFGP}
Hartikainen, J. and Särkkä, S.
\newblock {Kalman} filtering and smoothing solutions to temporal {Gaussian} process regression models.
\newblock In \emph{2010 IEEE International Workshop on Machine Learning for Signal Processing}, pp.\  379--384, 2010.

\bibitem[Jazwinski(2013)]{jazwinski2013stochastic}
Jazwinski, A.~H.
\newblock \emph{Stochastic processes and filtering theory}.
\newblock Courier Corporation, 2013.

\bibitem[Jones et~al.(1998)Jones, Schonlau, and Welch]{jones1998efficient}
Jones, D.~R., Schonlau, M., and Welch, W.~J.
\newblock Efficient global optimization of expensive black-box functions.
\newblock \emph{Journal of Global Optimization}, 13:\penalty0 455--492, 1998.

\bibitem[Kalman(1960)]{Kalman_org}
Kalman, R.~E.
\newblock A new approach to linear filtering and prediction problems.
\newblock \emph{Journal of Basic Engineering}, 82\penalty0 (1):\penalty0 35--45, 03 1960.
\newblock ISSN 0021-9223.

\bibitem[Kleinegesse \& Gutmann(2020)Kleinegesse and Gutmann]{kleinegesse2020bayesian}
Kleinegesse, S. and Gutmann, M.~U.
\newblock Bayesian experimental design for implicit models by mutual information neural estimation.
\newblock In \emph{International Conference on Machine Learning}, pp.\  5316--5326. PMLR, 2020.

\bibitem[Ling et~al.(2016)Ling, Low, and Jaillet]{ling2016gaussian}
Ling, C.~K., Low, K.~H., and Jaillet, P.
\newblock Gaussian process planning with lipschitz continuous reward functions: Towards unifying {Bayesian} optimization, active learning, and beyond.
\newblock In \emph{Proceedings of the AAAI Conference on Artificial Intelligence}, volume~30, 2016.

\bibitem[Marelli et~al.(2019)Marelli, Sui, Rohr, and Fu]{MARELLI2019390_Stochastic_LTI_discrete}
Marelli, D.~E., Sui, T., Rohr, E.~R., and Fu, M.
\newblock Stability of {Kalman} filtering with a random measurement equation: Application to sensor scheduling with intermittent observations.
\newblock \emph{Automatica}, 99:\penalty0 390--402, 2019.
\newblock ISSN 0005-1098.

\bibitem[McNabb(1986)]{mcnabb1986comparison}
McNabb, A.
\newblock Comparison theorems for differential equations.
\newblock \emph{Journal of Mathematical Analysis and Applications}, 119\penalty0 (1--2):\penalty0 417--428, 1986.

\bibitem[Napolitano et~al.(2024)Napolitano, Cognetti, Pallottino, Kanoulas, Salaris, and Modugno]{napolitano2024activeGP}
Napolitano, O., Cognetti, M., Pallottino, L., Kanoulas, D., Salaris, P., and Modugno, V.
\newblock Active sensing for data quality improvement in model learning.
\newblock \emph{IEEE Control Systems Letters}, 2024.

\bibitem[Ny et~al.(2009)Ny, Feron, and Dahleh]{le2009scheduling_Continous_LTI}
Ny, J.~L., Feron, E., and Dahleh, M.~A.
\newblock Scheduling {Kalman} filters in continuous time.
\newblock In \emph{2009 American Control Conference}, pp.\  3799--3805. IEEE, 2009.

\bibitem[Orihuela et~al.(2014)Orihuela, Barreiro, Gómez-Estern, and Rubio]{ORIHUELA20142672_periodKF_discrete}
Orihuela, L., Barreiro, A., Gómez-Estern, F., and Rubio, F.~R.
\newblock Periodicity of {Kalman}-based scheduled filters.
\newblock \emph{Automatica}, 50\penalty0 (10):\penalty0 2672--2676, 2014.
\newblock ISSN 0005-1098.

\bibitem[Passenberg(2012)]{passenberg2012theory}
Passenberg, B.
\newblock \emph{Theory and Algorithms for Indirect Methods in Optimal Control of Hybrid Systems}.
\newblock PhD thesis, Technische Universität München, 2012.

\bibitem[Qin et~al.(2024)Qin, van~der Schaar, and Lee]{qin2024risk_CD_NERUALODE}
Qin, Y., van~der Schaar, M., and Lee, C.
\newblock Risk-averse active sensing for timely outcome prediction under cost pressure.
\newblock \emph{Advances in Neural Information Processing Systems}, 36, 2024.

\bibitem[Rao(2009)]{rao2009surveyOCP}
Rao, A.~V.
\newblock A survey of numerical methods for optimal control.
\newblock \emph{Advances in the Astronautical Sciences}, 135\penalty0 (1):\penalty0 497--528, 2009.

\bibitem[Rozonoer(1959)]{rozonoer1959pontryagin}
Rozonoer, L.~I.
\newblock Pontryagin maximum principle in the theory of optimum systems.
\newblock \emph{Avtomat i Telemekh}, 20:\penalty0 1320--1334, 1959.

\bibitem[Simon(2006)]{simon2006optimal}
Simon, D.
\newblock \emph{Optimal state estimation: Kalman, H infinity, and nonlinear approaches}.
\newblock John Wiley \& Sons, 2006.

\bibitem[Snoek et~al.(2012)Snoek, Larochelle, and Adams]{snoek2012practical}
Snoek, J., Larochelle, H., and Adams, R.~P.
\newblock Practical {Bayesian} optimization of machine learning algorithms.
\newblock \emph{Advances in Neural Information Processing Systems}, 25, 2012.

\bibitem[Srinivas et~al.(2012)Srinivas, Krause, Kakade, and Seeger]{srinivas2012information}
Srinivas, N., Krause, A., Kakade, S.~M., and Seeger, M.~W.
\newblock Information-theoretic regret bounds for {Gaussian} process optimization in the bandit setting.
\newblock \emph{IEEE Transactions on Information Theory}, 58\penalty0 (5):\penalty0 3250--3265, 2012.

\bibitem[Särkkä \& Hartikainen(2012)Särkkä and Hartikainen]{sarkka2012infinite}
Särkkä, S. and Hartikainen, J.
\newblock Infinite-dimensional {Kalman} filtering approach to spatio-temporal {Gaussian} process regression.
\newblock In \emph{Proceedings of the Artificial Intelligence and Statistics}, pp.\  993--1001. PMLR, 2012.

\bibitem[Särkkä \& Svensson(2023)Särkkä and Svensson]{sarkka2023bayesian}
Särkkä, S. and Svensson, L.
\newblock \emph{Bayesian Filtering and Smoothing}, volume~17.
\newblock Cambridge University Press, 2023.

\bibitem[Todescato et~al.(2020)Todescato, Carron, Carli, Pillonetto, and Schenato]{GP_SP_TODESCATO2020109032}
Todescato, M., Carron, A., Carli, R., Pillonetto, G., and Schenato, L.
\newblock Efficient spatio-temporal {Gaussian} regression via {Kalman} filtering.
\newblock \emph{Automatica}, 118:\penalty0 109032, 2020.
\newblock ISSN 0005-1098.

\bibitem[Villani et~al.(2009)]{villani2009optimal}
Villani, C. et~al.
\newblock \emph{Optimal Transport: Old and New}, volume 338.
\newblock Springer, 2009.

\bibitem[von Stryk(1993)]{OG_Direct_collocation}
von Stryk, O.
\newblock \emph{Numerical Solution of Optimal Control Problems by Direct Collocation}, pp.\  129--143.
\newblock Birkhäuser Basel, Basel, 1993.
\newblock ISBN 978-3-0348-7539-4.

\bibitem[W{\"a}chter \& Biegler(2006)W{\"a}chter and Biegler]{wachter2006implementation}
W{\"a}chter, A. and Biegler, L.~T.
\newblock On the implementation of an interior-point filter line-search algorithm for large-scale nonlinear programming.
\newblock \emph{Mathematical programming}, 106:\penalty0 25--57, 2006.

\bibitem[Waharte \& Trigoni(2010)Waharte and Trigoni]{waharte2010supporting}
Waharte, S. and Trigoni, N.
\newblock Supporting search and rescue operations with uavs.
\newblock In \emph{2010 International Conference on Emerging Security Technologies}, pp.\  142--147. IEEE, 2010.

\bibitem[Wächter \& Biegler(2006)Wächter and Biegler]{wachter2006implementation_IPOPT}
Wächter, A. and Biegler, L.~T.
\newblock On the implementation of an interior-point filter line-search algorithm for large-scale nonlinear programming.
\newblock \emph{Mathematical Programming}, 106:\penalty0 25--57, 2006.

\bibitem[Yoon et~al.(2018)Yoon, Zame, and Schaar]{yoon2018deep_RRNActive}
Yoon, J., Zame, W.~R., and Schaar, M. V.~D.
\newblock Deep sensing: Active sensing using multi-directional recurrent neural networks.
\newblock In \emph{International Conference on Learning Representations}, 2018.

\bibitem[Zhao et~al.(2014)Zhao, Zhang, Hu, Abate, and Tomlin]{Zhao_Infnitehorizon_discrete_LTI}
Zhao, L., Zhang, W., Hu, J., Abate, A., and Tomlin, C.~J.
\newblock On the optimal solutions of the infinite-horizon linear sensor scheduling problem.
\newblock \emph{IEEE Transactions on Automatic Control}, 59\penalty0 (10):\penalty0 2825--2830, 2014.

\end{thebibliography}
\bibliographystyle{icml2025}

\newpage
\appendix
\onecolumn

\section{Convexity of the Kalman Update}
\label{sec:ConvexityKalmanUpdate}
The following lemma will be necessary to find the bound on Proposition \ref{prop:CovarianceBound}.
\begin{lemma}
    \label{lem:convixity_Kalman_update}
    Let $\Sigma\in\mathbb{S}^n_{>0},~C\in \mathbb{R}^{m\times n},$ and \(R\in\mathbb{S}^m_{>0}\), then the map \(\Sigma \mapsto \Sigma C^\top\left(C\Sigma C^\top +R\right)^{-1}C\Sigma \) is convex.
\end{lemma}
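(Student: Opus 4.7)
The plan is to exhibit $f(\Sigma) := \Sigma C^\top (C\Sigma C^\top + R)^{-1} C \Sigma$ as the composition of a jointly matrix convex function with an affine map in $\Sigma$. Define $B(\Sigma) := \Sigma C^\top$ and $D(\Sigma) := C\Sigma C^\top + R$. Both depend affinely on $\Sigma$, and $D(\Sigma) \succ 0$ for every $\Sigma \in \mathbb{S}^n_{>0}$ since $R \succ 0$. Writing $\Phi(B, D) := B D^{-1} B^\top$ for the matrix fractional function, we have $f(\Sigma) = \Phi\bigl(B(\Sigma), D(\Sigma)\bigr)$, so it suffices to show that $\Phi$ is jointly matrix convex on $\{(B,D) : D \succ 0\}$ and then invoke the affine composition rule.

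The key step is joint matrix convexity of $\Phi$. I would establish this via the Schur complement equivalence
\[
B D^{-1} B^\top \preceq P \iff \begin{pmatrix} P & B \\ B^\top & D \end{pmatrix} \succeq 0, \qquad D \succ 0.
\]
The right-hand side is a linear matrix inequality in the triple $(P, B, D)$, so the set of triples satisfying it is convex. Hence the Loewner epigraph $\{(P, B, D) : \Phi(B, D) \preceq P,\ D \succ 0\}$ is convex, which is the epigraph characterization of joint matrix convexity of $\Phi$. Composing with the affine map $\Sigma \mapsto \bigl(B(\Sigma), D(\Sigma)\bigr)$ then preserves matrix convexity and yields the claim.

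The only part requiring care is the Schur complement equivalence itself, but it follows from the block factorization
\[
\begin{pmatrix} P & B \\ B^\top & D \end{pmatrix}
= \begin{pmatrix} I & BD^{-1} \\ 0 & I \end{pmatrix}
\begin{pmatrix} P - BD^{-1}B^\top & 0 \\ 0 & D \end{pmatrix}
\begin{pmatrix} I & 0 \\ D^{-1}B^\top & I \end{pmatrix},
\]
which is a congruence transformation and therefore preserves positive semidefiniteness. An alternative path would rewrite $f(\Sigma) = \Sigma - (\Sigma^{-1} + C^\top R^{-1} C)^{-1}$ via Woodbury's identity and invoke joint matrix concavity of the parallel sum, but this requires a perturbation argument when $C^\top R^{-1} C$ is rank-deficient, so the Schur-complement route is preferable and handles any $C$ directly.
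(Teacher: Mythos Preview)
Your proof is correct and follows essentially the same approach as the paper: both arguments characterize the Loewner epigraph via the Schur complement, observe that the resulting block matrix inequality is linear in the relevant variables, and conclude convexity from the convexity of LMI-defined sets. Your version adds a small layer of abstraction by first proving joint matrix convexity of the generic matrix fractional function $\Phi(B,D)=BD^{-1}B^\top$ and then composing with the affine map $\Sigma\mapsto(\Sigma C^\top,\,C\Sigma C^\top+R)$, whereas the paper applies the Schur complement directly to $f(\Sigma)$ and notes that the blocks are affine in $(\Sigma,X)$; the underlying argument is identical.
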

\begin{proof}
A map $f:\mathbb{S}^n_{>0}\to\mathbb{S}^n_{\geq0}$ is convex if and only if its epigraph 
\[
\mathrm{epi}(f) \;=\;\{\,(\Sigma,X) \;\mid\; \Sigma \succ 0,\;X \succeq f(\Sigma)\}
\]
is a convex set. Thus, we need to show that
\[
X \;\succeq\; f(\Sigma) := \Sigma\,C^\top\bigl(C\,\Sigma\,C^\top + R\bigr)^{-1}C\,\Sigma
\]
defines a convex constraint in $(\Sigma,X)$. By the Schur complement, $X \succeq f(\Sigma)$ holds if and only if
\begin{equation}
\label{eq:LMI_proof_convex}
\begin{pmatrix}
X & \Sigma\,C^\top \\[4pt]
C\,\Sigma & C\,\Sigma\,C^\top + R
\end{pmatrix}
\succeq 0.
\end{equation}
Since each block of this matrix depends \emph{affinely} on the pair $(\Sigma,X)$, the set of all $(\Sigma,X)$ satisfying the above block-positivity is described by a linear matrix inequality. Since sets that are defined by linear matrix inequalities are convex, we conclude that
\[
\mathrm{epi}(f)
\;=\;
\left\{
(\Sigma,X)\colon \Sigma \succ 0,\;
\begin{pmatrix}
X & \Sigma\,C^\top \\
C\,\Sigma & C\,\Sigma\,C^\top + R
\end{pmatrix}
\succeq 0
\right\}
\]
is convex, which shows that $f(\Sigma)$ is convex.
\end{proof}
Note that concavity and
convexity are understood here in terms of the natural order structure associated
with the notion of positive semidefiniteness (Loewner order).
\section{Proof of Proposition \ref{prop:CovarianceBound} and Proposition \ref{prop:AuxStateBound}}
\label{sec:bounds_proofs}
We start by writing the conditional infinitesimals (Chapter 4 in \cite{hanson2007applied}) for \eqref{eq:Jump_CDKF} and \eqref{eq:Jump_xi_p}
\begin{equation}
    \label{eq:Conditional_inf_Sigma}
    \mathbb{E}\left[d\Sigma \mid \Sigma=\Sigma^*, \xi=\xi^* \right] = \left(A(\xi^*,t)\Sigma^* + \Sigma^* A(\xi^*,t)^\top + \sigma(\xi^*,t)\sigma(\xi^*,t)^\top - \sum^{N_s}_{s=1}\lambda_s(t)K_s(\Sigma^*,\xi^*,t)C_s(\xi^*,t)\Sigma^*\right)dt,
\end{equation}

\begin{equation}
    \label{eq:Conditional_inf_xi_p}
     \mathbb{E}\left[d\xi\mid\xi=\xi^*\right]=\left(f_p(\xi^*,t) + \sum^{N_s}_{s=1} \lambda_s(t)g_s(\xi^*,t)\right)dt.
\end{equation}
Let \(\bar{\xi}(t):=\mathbb{E}[\xi(t)]\) then $d\bar{\xi}=\mathbb{E}[d\xi]=\mathbb{E}\left[\mathbb{E}\left[d\xi\mid\xi\right]\right]$ where the first equality follows by Fubini's theorem and the mean-square integrability condition, and the second equality by the tower property. Utilizing assumption \ref{asm:Convexity_of_xi_p} with Jensen's inequality, we therefore get 
\begin{equation}
    d\bar{\xi}=\mathbb{E}\left[f_p(\xi,t)+\sum^{N_s}_{s=1} \lambda_s(t)g_s(\xi,t)\right]dt\leq_e (f_p(\bar{\xi},t) + \sum^{N_s}_{s=1} \lambda_s(t)g_s(\bar{\xi},t))dt.
\end{equation}
Letting \(\frac{d\hat{\xi}}{dt}=f_p(\hat{\xi},t) + \sum^{N_s}_{s=1} \lambda_s(t)g_s(\hat{\xi},t),\) with \(\hat{\xi}(0)=\bar{\xi}(0)=\xi_0\in\mathbb{R}^{n_\xi}\), we conclude proposition \ref{prop:AuxStateBound} using the comparision theorem of ordinary differential equations \cite{mcnabb1986comparison,budincevic2010comparison}. 

Similarly, denoting \(\bar{\Sigma}(t;\xi^*):=\mathbb{E}[\Sigma(t)\mid \xi(t)=\xi(t)^*]\) and taking the expectation with respect to \(\Sigma\) in \eqref{eq:Jump_CDKF}, we get
\begin{multline}
    d\bar{\Sigma}(t;\xi^*)\!:=\! \\\biggl(A(\xi^*,t)\bar{\Sigma}(t;\xi^*) + \bar{\Sigma}(t;\xi^*) A(\xi^*,t)^\top + \sigma(\xi^*,t)\sigma(\xi^*,t)^\top\!-\! \sum^{N_s}_{s=1}\lambda_s(t)\mathbb{E}\left[K_s(\Sigma,\xi^*,t)C_s(\xi^*,t)\Sigma\mid \xi=\xi^*\right]\biggr)dt.
\end{multline}
Using Lemma \ref{lem:convixity_Kalman_update}, we can apply Jensen's inequality to obtain
\begin{multline}
    \frac{d\bar{\Sigma}(t;\xi^*)}{dt} \preceq A(\xi^*,t)\bar{\Sigma}(t;\xi^*) +  \bar{\Sigma}(t;\xi^*) A(\xi^*,t)^\top\\ + \sigma(\xi^*,t)\sigma(\xi^*,t)^\top - \sum^{N_s}_{s=1}\lambda_s(t)K_s(\bar{\Sigma}(t;\xi^*),\xi^*,t)C_s(\xi^*,t)\bar{\Sigma}(t;\xi^*) := F(\bar{\Sigma},\xi^*,t)
\end{multline}
Now let \(\frac{d\hat{\Sigma}}{dt}=A(\xi^*,t)\hat{\Sigma} + \hat{\Sigma} A(\xi^*,t)^\top + \sigma(\xi^*,t)\sigma(\xi^*,t)^\top - \sum^{N_s}_{s=1}\lambda_s(t)K_s(\hat{\Sigma},\xi^*,t)C_s(\xi^*,t)\hat{\Sigma}=F(\hat{\Sigma},\xi^*,t)\) with \(\hat{\Sigma}(0)=\bar{\Sigma}(0)=\Sigma_0\in \mathbb{S}^n_{>0}\). By noting that \(F\) is order-preserving in the covariance matrix (\( \hat{\Sigma}\succeq \bar{\Sigma} \Rightarrow F(\hat{\Sigma},\xi^*,t)\succeq F(\bar{\Sigma},\xi^*,t)\)), we conclude \(\hat{\Sigma}(t)\succeq\bar{\Sigma}(t)\) for all \(t\geq 0\) using the comparison theorem of ordinary differential equations. This concludes the bound in Proposition \ref{prop:CovarianceBound}.

\section{Proof of Proposition \ref{prop:optimal-quantization}}
\label{sec:quantization_proof}
\begin{proof}
Since \(\mu_s\) is supported on \([0,T]\) with CDF \(F_s\), the quantile function 
\[
F_s^{-1}(p) = \inf\{ t\in[0,T] : F_s(t) \geq p \}
\]
partitions the interval into subintervals \([a^s_{i-1}, a^s_i]\) each having mass \(\tfrac{1}{n_s}\) under \(\mu_s\).  The discrete measure \(\nu_s\) has CDF \(G_s\) with jumps at \(\bar{t}^s_i\), so 
\[
G_s^{-1}(p) \;=\; \bar{t}^s_i 
\quad
\text{for } p \in \left(\tfrac{i-1}{n_s}, \tfrac{i}{n_s}\right].
\]
Hence,
\begin{equation*}
W_2^2(\mu_s,\nu_s)=
\int_0^1 \left(F_s^{-1}(p)-G_s^{-1}(p)\right)^2 \, dp
=
\min_{\bar{t}^s_1,\dots,\bar{t}^s_{n_s}} 
\sum_{i=1}^{n_s} 
\int_{\tfrac{i-1}{n_s}}^{\tfrac{i}{n_s}} 
\left(F_s^{-1}(p)-\bar{t}^s_i\right)^2 \, dp.
\end{equation*}
Taking derivatives with respect to each \(\bar{t}^s_i\) and setting them to zero yields
\[
\bar{t}^s_i 
\;=\;
n_s \int_{\tfrac{i-1}{n_s}}^{\tfrac{i}{n_s}} F_s^{-1}(p)\,dp
\;=\; 
\mathbb{E}\left[\tau_s \mid \tau_s \in [a^s_{i-1}, a^s_i]\right].
\]
Uniqueness follows from the strict convexity of the squared-error objective.  To see that \(\mathbb{E}[\tau_s] = \mathbb{E}[\tau^d_s]\), observe that
\[
\mathbb{E}[\tau_s] 
\;=\; 
\frac{1}{n_s}\sum_{i=1}^{n_s} \bar{t}^s_i
\;=\;
\int_0^T t \,\mu_s(dt)
\;=\;
\mathbb{E}[\tau^d_s],
\]
since each \(\bar{t}^s_i\) is the average of \(\tau_s\) over the interval \([a^s_{i-1},a^s_i]\). Finally, we have
\(\mathrm{Var}(\tau_s)-\mathrm{Var}(\tau^d_s) = W_2^2(\mu_s,\nu_s)\) which follows by the definition of $W^2_2(\mu_s,\nu_s)$.
\end{proof}
\section{Numerical Solutions for Optimal Control Problems}
\label{sec:num_OCP}
\begin{figure*}[ht!]
\centering
        \includegraphics[width = 0.6\textwidth]{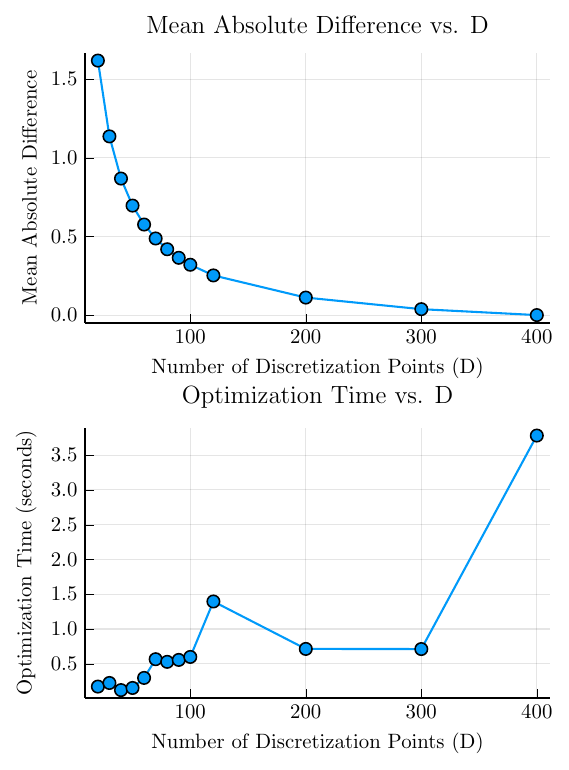}
    \label{fig:disc_points}
    \caption{Mean absolute difference between all the trajectories and the one with \(N_d=400\). Computation times are reported for each \(N_d\).}
\end{figure*}
There are several numerical methods to solve continuous-time OCPs. We list here the common methods used for solving them.

\emph{Direct methods \cite{OCP_direct}} work by discretizing the time axis and formulating the problem as a finite-dimensional Nonlinear Program (NLP). There are two common approaches for direct methods: direct single shooting and direct collocation.

In direct single shooting, the control trajectories \(u(t)\) and \(\lambda(t)\) are parametrized using techniques such as piecewise-constant or polynomial basis functions. The system dynamics are then integrated forward to compute the objective and constraint values, with the NLP's decision variables corresponding to the control parameters. Direct collocation \cite{OG_Direct_collocation,Direct_collecation} follows a similar approach but also parametrizes the state trajectories, optimizing over both control and state parameters. Continuity or matching conditions are enforced through the system's differential equations.
Direct methods are well-suited for handling problems with many constraints and benefit from efficient off-the-shelf NLP solvers such as IPOPT \cite{wachter2006implementation_IPOPT}. However, as the time grid becomes finer, it can lead to large-scale optimization problems.

\emph{Indirect methods \cite{rao2009surveyOCP,passenberg2012theory}} rely on necessary conditions for optimality (e.g., Pontryagin’s Minimum Principle \cite{rozonoer1959pontryagin}). One formulates the costate (or adjoint) equations and boundary conditions, then solves a boundary value problem (BVP). While indirect methods can offer a deeper analytical insight into the problem, they can be challenging to implement for problems with many inequality constraints and nontrivial system dynamics. Additionally, numerical tools to solve BVP are sensitive to initial conditions.

In practice, the choice between direct and indirect methods often depends on problem size, constraint complexity, and the availability of good initial guesses or analytical insights.

For the examples in this paper, we choose a direct collocation approach with forward Euler discretization and adaptive discretization steps. To ensure the positive definiteness of the covariance matrix in the NLP setup, we parametrized it with a Cholesky decomposition. These choices are not limited and one can, in principle, experiment with different choices based on the specific problem to solve.

It is important to note that discretization-based methods for OCPs with constraints over a horizon \(t\in[T_1,T_2]\) may not strictly satisfy the constraints for all \(t\in[T_1,T_2]\) due to numerical approximation errors. 
The choice of the discretization scheme inherently depends on the specific problem structure and its solution. While increasing the number of discretization points or adopting higher-order integration methods can improve accuracy, these enhancements often come with an increased computational cost. Figure \ref{fig:disc_points} illustrates this trade-off in a two-sensor scheduling scenario for a two-dimensional CD-KF. We apply implicit Euler discretization with varying numbers of discretization points and quantify the resulting estimation error against a high-resolution reference (400 discretization points). Additionally, we record wall-clock runtimes on an Intel Core™ Ultra 7 155H @ 1.40 GHz device with 64 GB RAM. The optimization problems are formulated in Julia using the JuMP \cite{DunningHuchetteLubin2017} framework and solved with IPOPT \cite{wachter2006implementation}. As the number of discretization points increases, the estimation accuracy improves, but at the cost of higher computation times.
\pagebreak
\clearpage

\section{Additional Example: Water Quality Monitoring with Fouling and Active Defouling}
\label{sec:water}
\begin{figure*}[t!]
\centering
        \includegraphics[width = 1\textwidth]{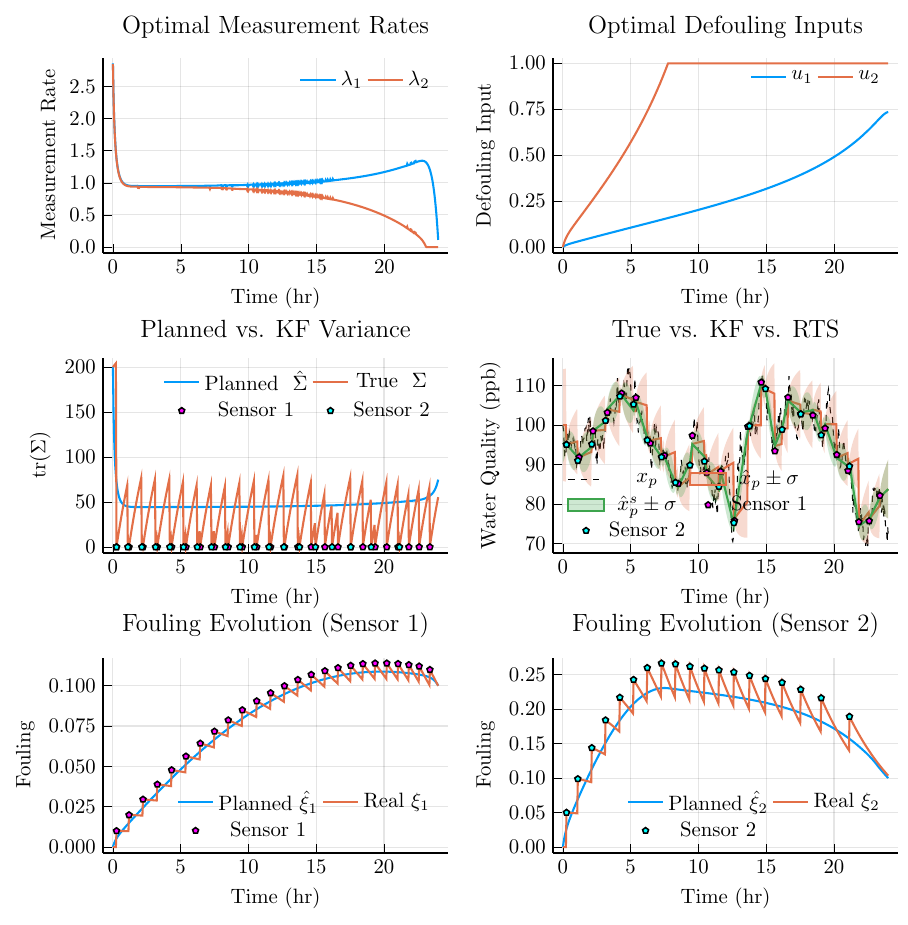}
    \label{fig:Water}
    \caption{Water fouling example. The KF estimate is \(\hat{x}_p\) and the RTS estimate is \(\hat{x}_p^s\). }
\end{figure*}
In many real‐world water quality monitoring applications, sensors deployed in rivers, lakes, or coastal areas are prone to fouling due to the accumulation of biological, chemical, or sedimentary deposits. Fouling degrades sensor performance by increasing the measurement noise \cite{Biofouling}. In addition, active defouling techniques (e.g., ultrasonic cleaning) have been developed to counteract fouling, although these methods incur their own operational costs \cite{delgado2021antifouling}. In this example, we describe a model in which sensor scheduling is optimized jointly with active defouling control.

We assume that the water quality parameter evolves according to a linear model:
\begin{equation}
dx_p = -\kappa\,(x_p - x_{\mathrm{amb}})\,dt + \sigma\,dW,\quad x_p(0) \sim \mathcal{N}(\mu_0, \Sigma_0),
\end{equation}
with parameters set to \(\kappa  >0\), \(x_{\mathrm{amb}}>0\), and \(\sigma>0\).

Each sensor provides a measurement of \(x_p\) at discrete times
where the noise covariance \(R_s\) of each sensor \(s\in\{1,2\}\) depends on the fouling level via
\begin{equation}
\label{eq:fouling_lvl_cov}
R_s(\xi,t)= R_{s0}\,\exp\!\Bigl(\lambda_s^f\, \xi_p^s(t)\Bigr),
\end{equation}
where \(\xi^p_s\) is the fouling level for sensor \(s\), \(R_{10}>R_{20}>0\), and \(\lambda_1^f=\lambda_2^f>0\).
Higher fouling leads to a larger \(R_s\) and, hence, lower measurement accuracy.

For each sensor \(s\), the fouling level evolves as
\begin{equation}
\label{eq:fouling_eq}
\frac{d\xi_p^s}{dt} = -\left(\alpha^f_s+ \gamma^f_s\, u_s(t)\right)\,\xi_p^s(t) + \sum^{N_s}_{i=1}\rho^f_s \delta_{t^s_{i}},~s\in\{1,2\},
\end{equation}
where the parameters \(\alpha^f_1 = \alpha^f_2>0\) represent the natural cleaning rate, \(\rho^f_1, \rho^f_2 >0\) representing the parameters for the fouling increase per measurement with \(\rho^f_2>\rho^f_1>0\), \(u_s(t)\geq 0\) represent active fouling control with rates \(\gamma^f_1 = \gamma^f_2>0\).
Remark that the second sensor can provide more accurate measurements (\(R_{02}<R_{01}\)) but is more prone to fouling with each measurement (\(\rho^f_2>\rho_1^f\)) due, for example, to an increased surface area. The goal is to schedule the measurements for a CD-KF while optimizing for the defouling inputs \(u=\left[u_1~u_2\right]^\top\).
We formulate an optimal control problem of the form in \eqref{eq:OCP} for a horizon of \(T=24~\mathrm{hr}\) where the running cost is
\begin{equation*}
\mathcal{L}(\hat{\xi}, \hat{\Sigma}, u, \lambda, \varepsilon)
= w_{\Sigma}\,\mathrm{tr}(\hat{\Sigma})
  + w_{\lambda}\,\lambda^\top \lambda
  + w_{u}\,u^\top u,
\end{equation*}
where \( w_{\Sigma} \geq w_{\lambda} \geq w_{u} > 0\). The running constraints includes input bounds \(\underline{u}\leq_e u\leq_e \bar{u}\) and an upper bound on the fouling for each sensor \(\hat{\xi}_p^1,\hat{\xi}^2_p\leq c_{\xi}\) with \(c_{\xi}>0\). Figure \ref{fig:Water} shows the results. We can see from the results how the scheduling scheme brings the trace of the covariance matrix to a low level compared to its initial value. Additionally, we see that the planned measurements result in good estimates of the process. It is also worth noting how the OCP solution uses both the sensors equally in the beginning to reduce the trace of the covariance quickly. After that, the OCP solution reduces the rate of measurements for sensor 2 while saturating the defouling input to clean it. At the same time, it increases the rate of measurement for the first sensor while simultaneously increasing the defouling input at a faster rate. 

\section{Additional Example: Spacecraft Monitoring Targets}
\label{sec:Sat-space}
\begin{figure*}[ht!]
\centering
        \includegraphics[width = 1\textwidth]{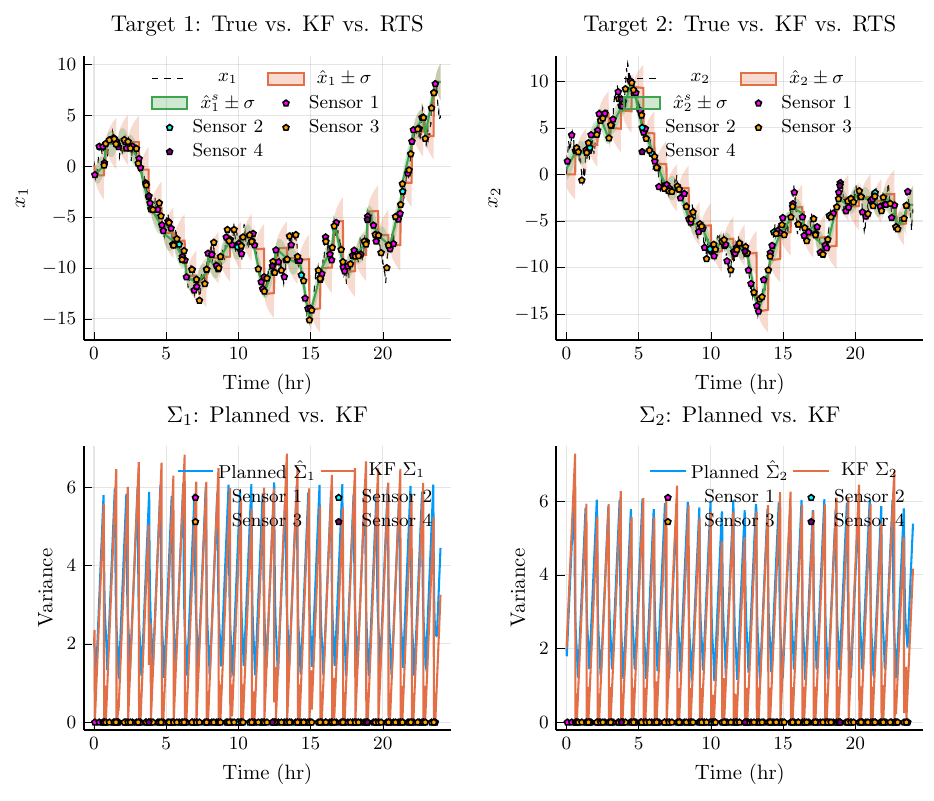}
    \label{fig:sat_kf}
    \caption{spacecraft example with the targeted processes. The KF estimates are \(\hat{x}_1,\hat{x}_2\) and the RTS estimates are \(\hat{x}_1^s,\hat{x}_2^s\). See an animation of the results on \url{https://github.com/MOHAMMADZAHD93/When2measureKF}.}
\end{figure*}
\begin{figure*}[ht!]
\centering
    \includegraphics[width = 1\textwidth]{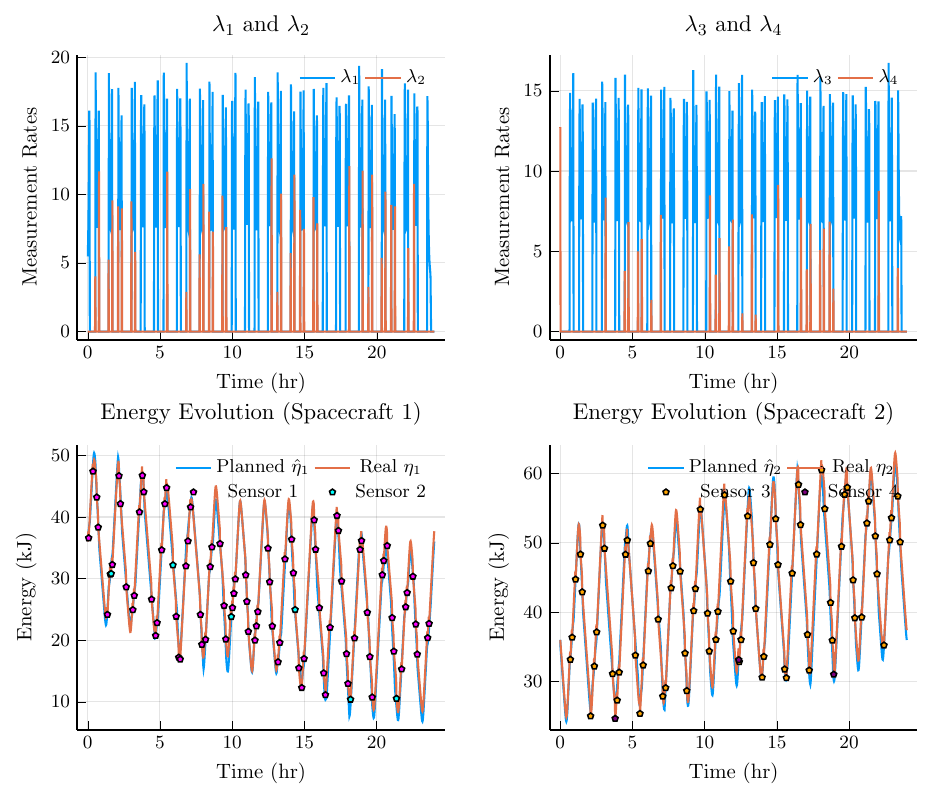}
    \label{fig:sat_E}
    \caption{spacecraft example with measurement rates and the energy states.  See an animation of the results on \url{https://github.com/MOHAMMADZAHD93/When2measureKF}.}
\end{figure*}
We consider two spacecraft in circular low Earth orbit at an altitude of \(500\,\mathrm{km}\) separated by \(\pi\,\mathrm{rad}\). Each platform is tasked to monitor two processes on Earth’s surface (for example, surface temperature and humidity). We model each target’s local dynamics as an Ornstein Uhlenbeck process (equivalent to a Gaussian process with exponential kernel; see Appendix \ref{Appendix:Exp_nonconvex}):
\[
\begin{aligned}
dx_{1} &= A_{1}\,x_{1}\,dt + \sigma_{1}\,dW_{1},\\
dx_{2} &= A_{2}\,x_{2}\,dt + \sigma_{2}\,dW_{2},
\end{aligned}
\]
where \(A_{1},A_{2}<0\) and \(\sigma_{1},\sigma_{2}>0\).

Each spacecraft has two distinct sensors. The first sensor on each spacecraft is a sensor whose accuracy depends on solar illumination. The second sensor on each spacecraft operates independently of solar illumination but requires higher energy per measurement. The accuracy of the first sensor with solar illumination is better than the second one, but it become worse in the absence of illumination. Additionally, daylight sensors are more prone to fail when used excessively compared to the other sensors. To capture these sensors in our framework, we denote the output matrices \(C_1\) and \(C_2\) for the first and second sensor in the first spacecraft, respectively, and \(C_3\) and \(C_4\) for the first and second sensor in the second spacecraft, respectively. We also denote the same for the corresponding measurement‐noise variances \(R_{1},R_{2},R_{3},R_{4}\). Since each platform can only observe a ground target when it falls within its line of sight, we model the output matrices according to an angular visibility function:
\[
C_{s}\bigl(\theta_{\mathrm{sat}_{i}},\,\theta_{\tau_{j}}\bigr)
=
\exp\!\Biggl(
-\,\biggl(
\frac{\kappa_{\delta}\,\bigl(1-\cos\bigl(\theta_{\mathrm{sat}_{i}}-\theta_{\tau_{j}}\bigr)\bigr)}{\,1-\cos(\delta)\,}
\biggr)^{4}
\Biggr),
\quad
i\in\{1,2\},\;j\in\{1,2\},\;s\in\{1,2,3,4\}.
\]
Here, \(\theta_{\mathrm{sat}_{i}}\) denotes the angular position of the \(i\)-th spacecraft, \(\theta_{\tau_{j}}\) denotes the geodetic longitude of the \(j\)-th surface target, \(\delta\) is the half‐cone angle defining the field of view, and \(\kappa_{\delta}>0\) is a shaping constant. 

For the daylight-dependent sensors (Sensor 1 aboard spacecraft 1 and Sensor 3 aboard spacecraft 2), we model the instantaneous measurement variance as a smooth function of the solar illumination angle:
\[
R_{s}\bigl(\theta_{\mathrm{sat}_{i}},\,\theta_{\odot}\bigr)
=
R_{\min}\,\rho_{\kappa_{\odot}}\bigl(\theta_{\mathrm{sat}_{i}},\,\theta_{\odot}\bigr)
\;+\;
R_{\max}\,\bigl[\,1-\rho_{\kappa_{\odot}}\bigl(\theta_{\mathrm{sat}_{i}},\,\theta_{\odot}\bigr)\bigr],
\quad
i\in\{1,2\},\;s\in\{1,3\},
\]
where 
\[
\rho_{\kappa}(\theta,\,\theta')
=\frac{1}{1 + \exp\bigl(-\kappa\,\cos(\theta-\theta')\bigr)}, 
\]
where \(\theta_{\odot}\) is the solar illumination angle, \(R_{\max}>R_{\min}>0\), and \(\kappa_{\odot}>0\) is a shaping constant determining how sharply the variance transitions from \(R_{\min}\) (when the platform and sun are coaligned) to \(R_{\max}\). For the solar independent sensors (sensor 2 and sensor 4), we fix the variance at a nominal value
\[
R_{2} = R_{4} = R_{\oplus},
\]
chosen so that \(R_{\min} < R_{\oplus} < R_{\max}\). In a more detailed analysis, one could distinguish Earth occultation of direct sunlight from simple misalignment, but here we adopt this smooth logistic model in order to keep the example concise.

Each spacecraft carries an onboard battery whose stored energy (\(\eta_{1}\) for the first spacecraft and \(\eta_2\) for the second one) evolves according to
\[
\begin{aligned}
\frac{d\eta_{1}}{dt}
&=
-\,p_{h}
\;+\;
p_{c}\,\rho_{\kappa_{\eta}}\bigl(\theta_{\mathrm{sat}_{1}},\,\theta_{\odot}\bigr)\,u_{1}
\;+\;
\sum_{s=1}^{2}\,\sum_{i=1}^{N_{s}}c_{s}\,\delta_{t^{s}_{i}},\\
\frac{d\eta_{2}}{dt}
&=
-\,p_{h}
\;+\;
p_{c}\,\rho_{\kappa_{\eta}}\bigl(\theta_{\mathrm{sat}_{2}},\,\theta_{\odot}\bigr)\,u_{2}
\;+\;
\sum_{s=3}^{4}\,\sum_{i=1}^{N_{s}}c_{s}\,\delta_{t^{s}_{i}},
\end{aligned}
\]
where \(p_{h}>0\) is the nominal operating power of each spacecraft, \(p_{c}>0\) is the peak solar charging power, \(\kappa_{\eta}>0\) is a shaping constant, \(u_{1},u_{2}\in[0,1]\) are control variables controlling the charging of each battery such that the stored energy does not exceed an upper limit, and \(c_{s}\) is the energy cost per measurement of sensor \(s\), with \(c_{1}=c_{3}>c_{2}=c_{4}>0\).

We formulate an optimal control problem with a horizon of \(T=24~\mathrm{hr}\). The running cost is \begin{equation*}
\mathcal{L}(\hat{\xi}, \hat{\Sigma}, u, \lambda, \varepsilon)
= w_{\Sigma}\,(\hat{\Sigma}_1+\hat{\Sigma}_2)
  + w_{\lambda}\,\lambda^\top Q\lambda,
\end{equation*}
where \(\Sigma_1\) and \(\Sigma_2\) are the variances of the CD-KF estimate for first target process and the second target process, respectively, and \( w_{\Sigma} \geq w_{\lambda} \geq w_{u} > 0\), and \(Q=\operatorname{diag}(\left[10~1~10~1\right])\) used to increase the weights on Sensor 1 and Sensor 3 as they are more prone to fail with excessive use than Sensor 2 and Sensor 4. 
The running constraints includes input bounds \(0\leq_e u\leq_e 1\), and a bound on the maximum stored energy \(\hat{\eta}_i\leq \eta_{\max}, ~i\in\{1,2\}\). The results are shown in Figure \ref{fig:sat_kf} and Figure \ref{fig:sat_E}. We see from \ref{fig:sat_kf} that the scheduling scheme manages to provide estimates tracking the processes nicely. We see an obvious periodic behaviour for the variances for each process. This periodicity is natural as both spacecraft passes over the target processes with periodically. We can observe this periodicity also in Figure \ref{fig:sat_E} for both the scheduled measurement rates and the stored energy for each spacecraft.

\section{A Receding Horizon Framework for Unknown or Nonlinear Dynamics}
\label{sec:RH-framework}
In many practical scenarios, the dynamics that govern the process we wish to filter are either uncertain or nonlinear. One approach to extend our framework to handle these cases is a Receding Horizon (RH) formulation or a model predictive framework. RH approaches enjoy a rich literature showing that, by repeatedly solving a finite‐horizon OCP with updated model information, one can retain provable closed‐loop guarantees even when the prediction model is misspecified or learned online \cite{bemporad2007robust}. Since our proposed approach in this paper is cast as an OCP, we can naturally extend it to an RH setup and utilize the existing approaches for robust or learning-based RH. 

One way to extend our approach to an RH setup is to solve the OCP in \eqref{eq:OCP} at time \(t_l\) for a horizon \([t_l,T+t_l]\), obtain the measurement rates and the other possible control inputs over the horizon \([t_l,T+t_l]\), use \eqref{eq:times_rule} to find the time \(t_{l+1}\) for the first sensor \(s^*\) to measure within the interval \([t_l,T+t_l]\), apply the possible control for \(t\in[t_l,t_{l+1}]\), measure with \(s^*\) at \(t_{l+1}\), update the KF estimates and the uncertain parameters of the model based on the measurement at time \(t_{l+1}\), and re-solve the OCP for the horizon \([t_{l+1}, T+t_{l+1}]\) and repeat. In case of a nonlinear SSM, we can use the EKF \cite{simon2006optimal,sarkka2023bayesian} for each OCP optimization. The EKF uses linearized dynamics for the covariance propagation, and we can update the linearized dynamics after each measurement between the finite-horizon OCPs. 
\begin{figure*}[t!]
\centering
        \includegraphics[width = 1\textwidth]{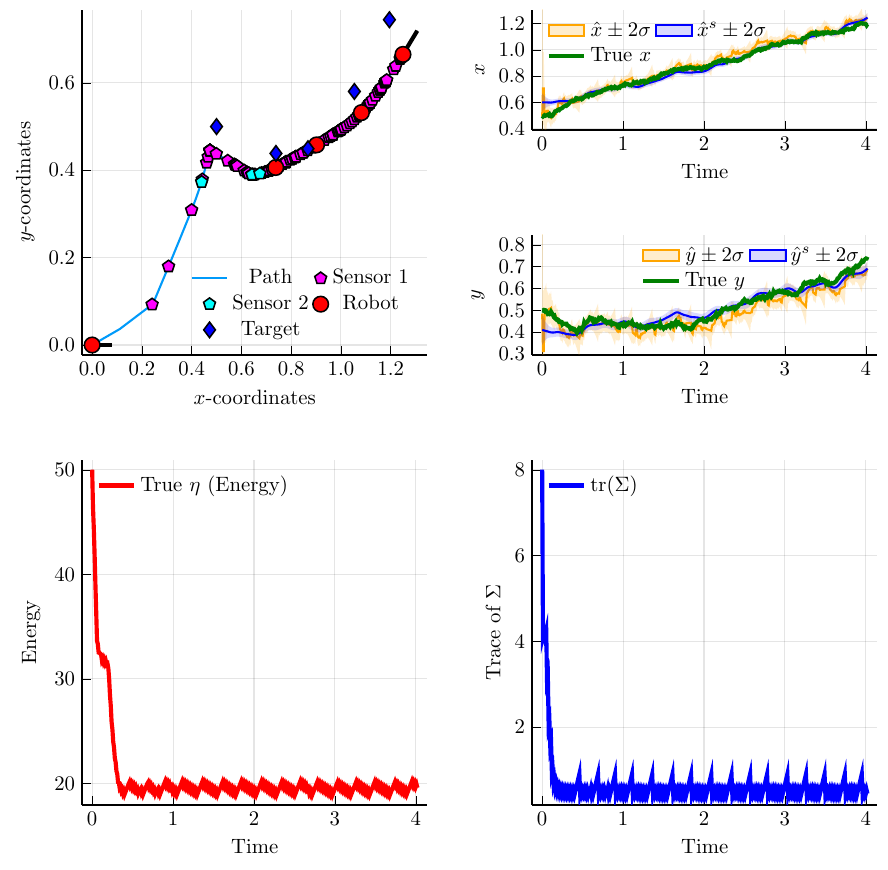}
    \label{fig:target_track}
    \caption{Target Tracking example utilizing an RH approach. The KF position estimates are \(\hat{x},\hat{y}\) and the RTS position estimates are \(\hat{x}^s,\hat{y}^s\). See an animation of the results on \url{https://github.com/MOHAMMADZAHD93/When2measureKF}.}
\end{figure*}
To illustrate this approach, we implemented a moving target tracking task in which a mobile robot with a unicycle model with an energy budget must localize and track a target whose true motion follows a constant turn rate velocity model perturbed by noise
\begin{equation}
\label{eq:CTRV_model}
\begin{aligned}
dx_\tau &= v_\tau\cos\theta_\tau\,dt + \sigma_{x}\,dW_{x},\\
dy_\tau &= v_\tau\sin\theta\,dt + \sigma_{y}\,dW_{y},\\
dv_\tau &= \sigma_{v}\,dW_{v},\\
d\theta_\tau &= \omega_\tau\,dt + \sigma_{\theta}\,dW_{\theta},\\
d\omega_\tau &= \sigma_{\omega}\,dW_{\omega}.
\end{aligned}
\end{equation}
where \(p_\tau = \left[x_\tau,y_\tau\right]^\top\) is the position coordinates of the target, \(\theta_\tau\) is the heading of the target, \(v_\tau, \omega_\tau\) are the heading linear and angular velocity of the target, respectively, and \(\sigma_x,\sigma_y,\sigma_v,\sigma_\theta,\sigma_\omega >0\) with \(W_x,W_y,W_v, W_\theta\) and \(W_\omega\) being independent Wiener processes. 
The OCP and the CD-KF assume a constant velocity model for the target with random perturbations on the velocity only
\begin{equation}
\label{eq:CV_model}
\begin{aligned}
dx &= \tilde{v}_x\,dt \\
dy &= \tilde{v}_y\,dt \\
d\tilde{v}_x &= \sigma_{v_x}\,dW_{v_x},\\
d\tilde{v}_y &= \sigma_{v_y}\,dW_{v_y},
\end{aligned}
\end{equation}
where \(\tilde{p}_\tau = \left[x,y\right]^\top\) is the position coordinates of the target, \(\tilde{v}_x,\tilde{v}_y\) are the linear velocities on the horizontal and vertical axis, respectively, and \(\sigma_{v_x},\sigma_{v_y}>0\) with \(W_{v_x}\) and \(W_{v_y}\) being independent Wiener processes. We assume we have two sensors for the position of the target and that the measurement noise covariance matrices are given as the ones in \eqref{eq:meas} with \(p_p\) replaced with \(p_\tau\) for simulation and the estimated \(\tilde{p}_\tau\) in the OCP. Additionally, the energy of the robot follows the same dynamics in \eqref{eq:energy} but with a constant charging rate that does not depend on the position of the robot (more suitable for target tracking). We used an RH setup to plan for the measurements and the robot's velocities for a horizon of \(T=0.05\) where for each OCP optimization, we use the current estimate for the target according to the constant velocity model \eqref{eq:CV_model} instead of the model \eqref{eq:CTRV_model} which was used for simulation. The running cost for each OCP is
\[\mathcal{L}(\hat{\xi}, \hat{\Sigma}, u, \lambda, \varepsilon)
= w_{\Sigma}\,\mathrm{tr}(\hat{\Sigma})
  + w_{\lambda}\,\lambda^\top \lambda
  + w_{u}\,u^\top u\]
where \( w_{\Sigma} \geq w_{\lambda} \geq w_{u} \geq 0\), and the terminal cost being the distance between the robot and the predicted location of the target according to the model in \eqref{eq:CV_model}. The running constraints \(\mathcal{C}\) encodes the input bounds \(\underline{u} \leq_e u \leq_e \bar{u}\), \(\lambda\geq_e 0\), and an energy lower bound \(\hat{\eta} \geq c_\eta\) with \(c_\eta \geq 0\). The terminal constraint \(\mathcal{C}_T\) encodes a tighter lower bound on energy to ensure feasibility for the next optimization.
The results are illustrated in Figure \ref{fig:target_track}. The results in Figure \ref{fig:target_track} show that the RH implementation manages to make the robot obtain close estimates to the target's position, which enables the robot to track it closely. Additionally, we can see that the energy drops to a lower value close to the terminal constraint on energy with measurement-induced oscillations. We also see that the same thing happens to the covariance matrix of the CD-KF. 

\section{Experiments for Nonconcave/Nonconvex Auxiliary Dynamics}
\label{Appendix:Exp_nonconvex}
In this section, we will solve the OCP in \eqref{eq:OCP} for several cases in which we modify the auxiliary dynamics for the robot example in Section \ref{sec:Experiments} and the water sensors in Appendix \ref{sec:water} to violate the concavity/convexity assumption \ref{asm:Convexity_of_xi_p}. For the robot example in Section \ref{sec:Experiments}, we consider three different cases in which we modify equation \eqref{eq:xis} to be 
\begin{equation}
\frac{d\zeta_s}{dt} 
= \sum_{i=1}^{N_s} \gamma_i \exp\!\left(-r_\zeta \|p_r - p_p\|^2\right)\left(\frac{1}{2}+\frac{1}{2}\psi_\zeta(\zeta_s)\right)\,\delta_{t^s_i},
\end{equation}
where \(\psi_\zeta(\zeta_s)=\frac{1}{1+\exp(-\zeta_s)}\) for the first case, \(\psi_\zeta(\zeta_s)=\frac{1}{\left(1+\exp(-\zeta_s)\right)^3}\) for the second and third case (the degradation cost doubles as the degradation state for each sensor increases). The difference in the third case is that we modify the energy dynamics in equation \eqref{eq:energy} to be
\begin{equation}
\label{eq:energy_m}
\frac{d\eta}{dt}
= c_e\psi_{\eta_1}(\eta)\exp\bigl(r_e \|p_r - p_e\|^2\bigr) 
  - c_u v^2 
  - c_u \omega^2 
  - \sum_{s=1}^2 \sum_{i=1}^{N_s} c_s\psi_{\eta_2}(\eta) \,\delta_{t^s_i},
\end{equation}
where \(\psi_{\eta_1}(\eta)=\frac{1+\exp(-\left(\eta-20\right))}{2+\exp(-\left(\eta-20\right))}\) (charging efficiency is halfed as the stored energy increases), and \(\psi_{\eta_2}(\eta)=1+\exp(-\eta^2/5^2)\) (low stored energy increases the energy costs). For the example of the water sensors in Appendix \ref{sec:water}, we consider a case in which 
fouling equation in \eqref{eq:fouling_eq} is modified to be
\begin{equation}
\label{eq:fouling_m}
\frac{d\xi_p^s}{dt} = -\left(\alpha^f_s+ \gamma^f_s\, u_s(t)\right)\,\left(\xi_p^s(t)\right)^3 + \sum^{N_s}_{i=1}\rho^f_s \delta_{t^s_{i}},~s\in\{1,2\}.
\end{equation}
The results for the different cases are shown in Figure \ref{fig:robot_nonconvex_case1}, Figure \ref{fig:robot_nonconvex_case2}, Figure \ref{fig:robot_nonconvex_case3}, and Figure \ref{fig:water_nonconvex}.
The figures for robot cases show that we can still obtain satisfactory results even if the dynamics for the auxiliary states violate assumption \ref{asm:Convexity_of_xi_p}. The same can also be said for the results of the modified water fouling example in Figure \ref{fig:water_nonconvex}, except for the fouling state of the second sensor, where it can be seen that the planned fouling becomes different from the true one towards the end. Nevertheless, the planned and true covariance matrix \(\Sigma\) for all of these examples are similar. These empirical examples support the idea in Remark \ref{remark:OCP} that our method can still work in situations where the dynamics of the perturbed auxiliary state have a local near-linear behaviour.
\begin{figure*}[t!]
\centering
        \includegraphics[width = 1\textwidth]{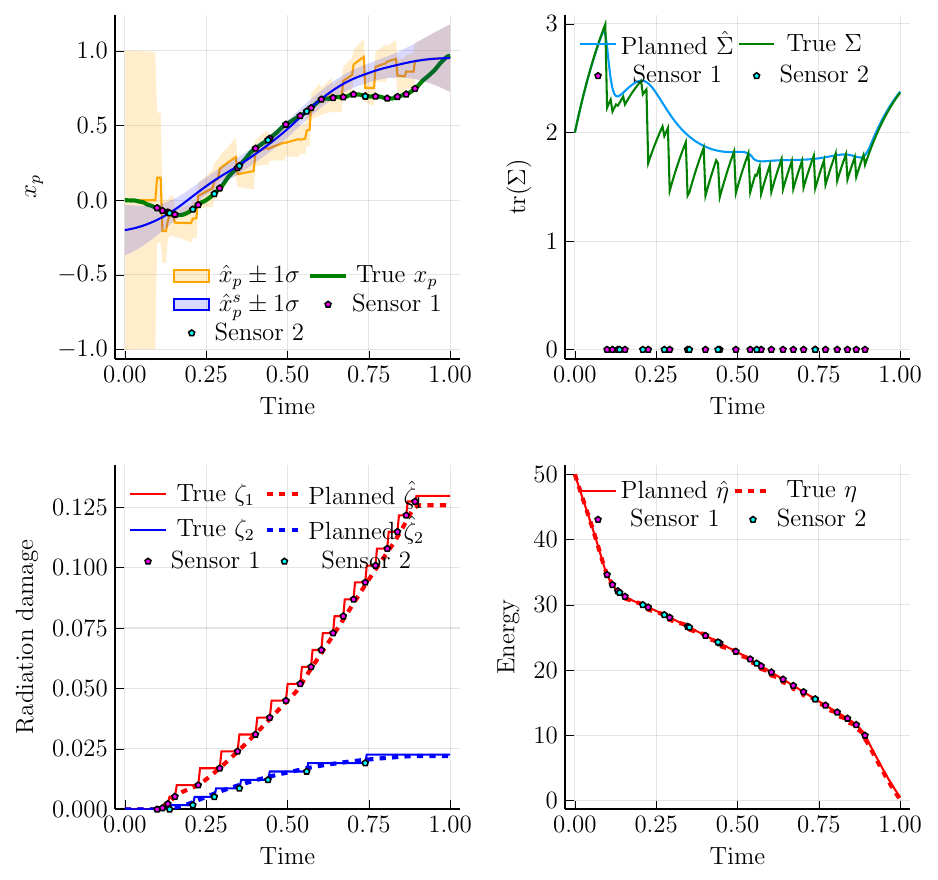}
    \caption{The first modified (nonconvex/nonconcave) case for the robot example in Section \ref{sec:Experiments}. The KF estimate is \(\hat{x}_p\) and the RTS estimate is \(\hat{x}_p^s\). }
    \label{fig:robot_nonconvex_case1}
\end{figure*}
\begin{figure*}[t!]
\centering
        \includegraphics[width = 1\textwidth]{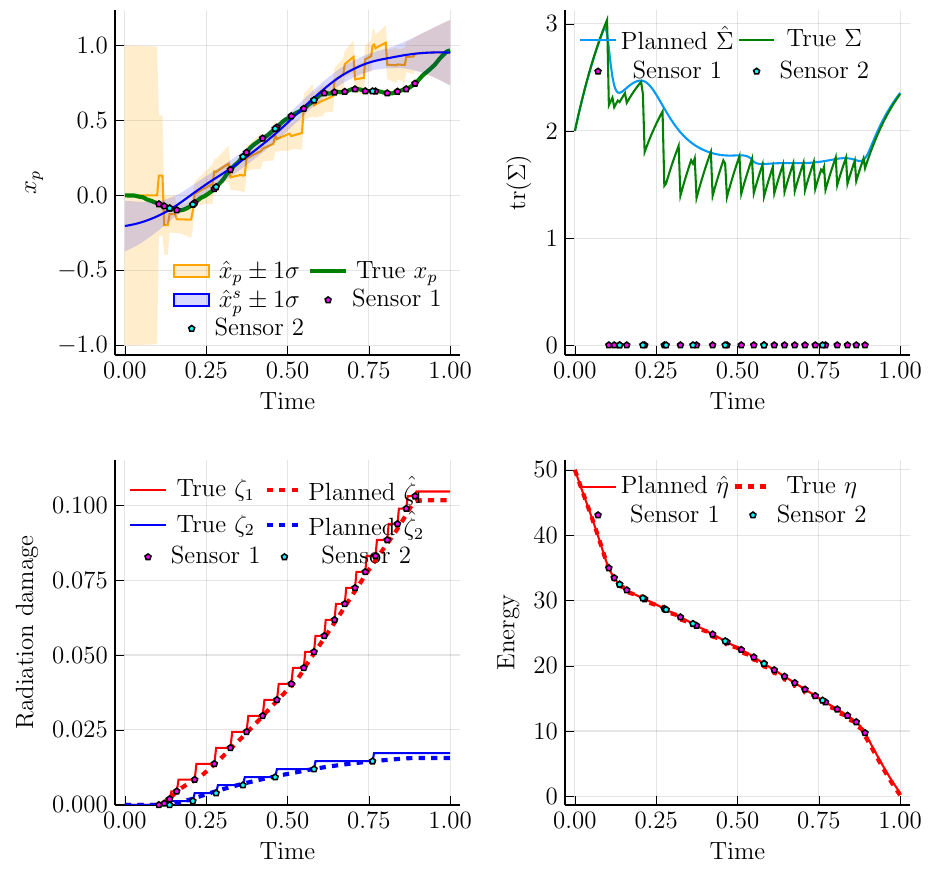}
    \caption{The second modified (nonconvex/nonconcave) case for the robot example in Section \ref{sec:Experiments}. The KF estimate is \(\hat{x}_p\) and the RTS estimate is \(\hat{x}_p^s\). }
    \label{fig:robot_nonconvex_case2}
\end{figure*}
\begin{figure*}[t!]
\centering
        \includegraphics[width = 1\textwidth]{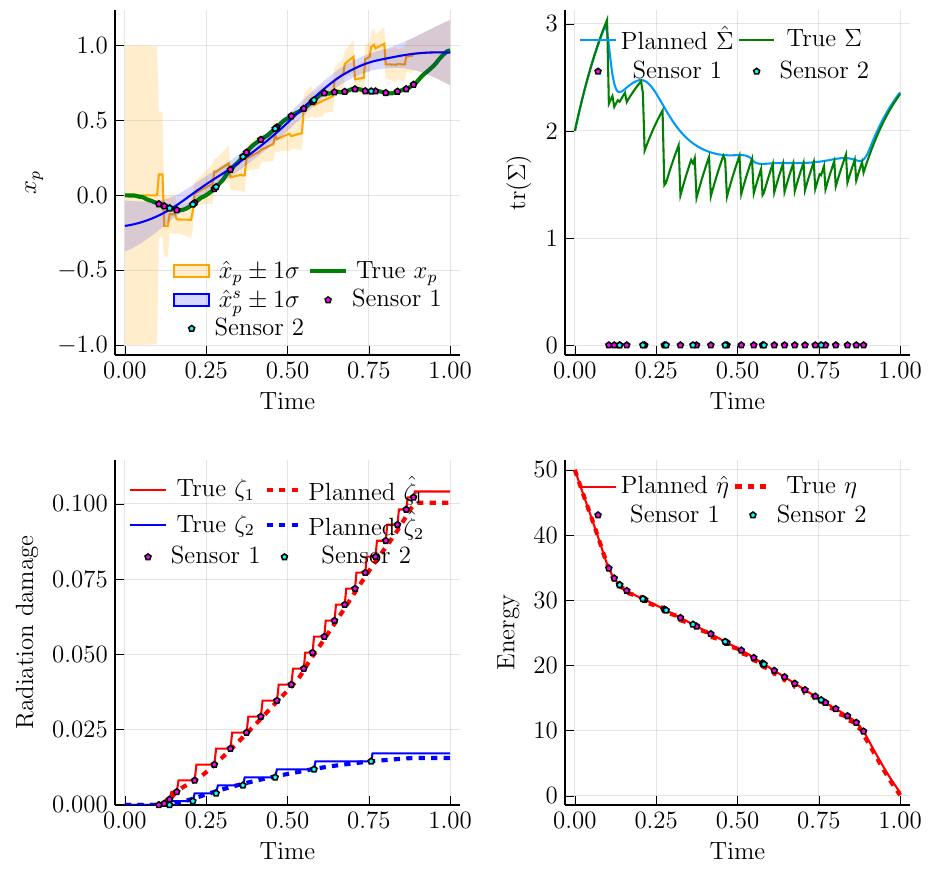}
    \caption{The third modified (nonconvex/nonconcave) case for the robot example in Section \ref{sec:Experiments}. The KF estimate is \(\hat{x}_p\) and the RTS estimate is \(\hat{x}_p^s\). }
    \label{fig:robot_nonconvex_case3}
\end{figure*}
\begin{figure*}[t!]
\centering
        \includegraphics[width = 1\textwidth]{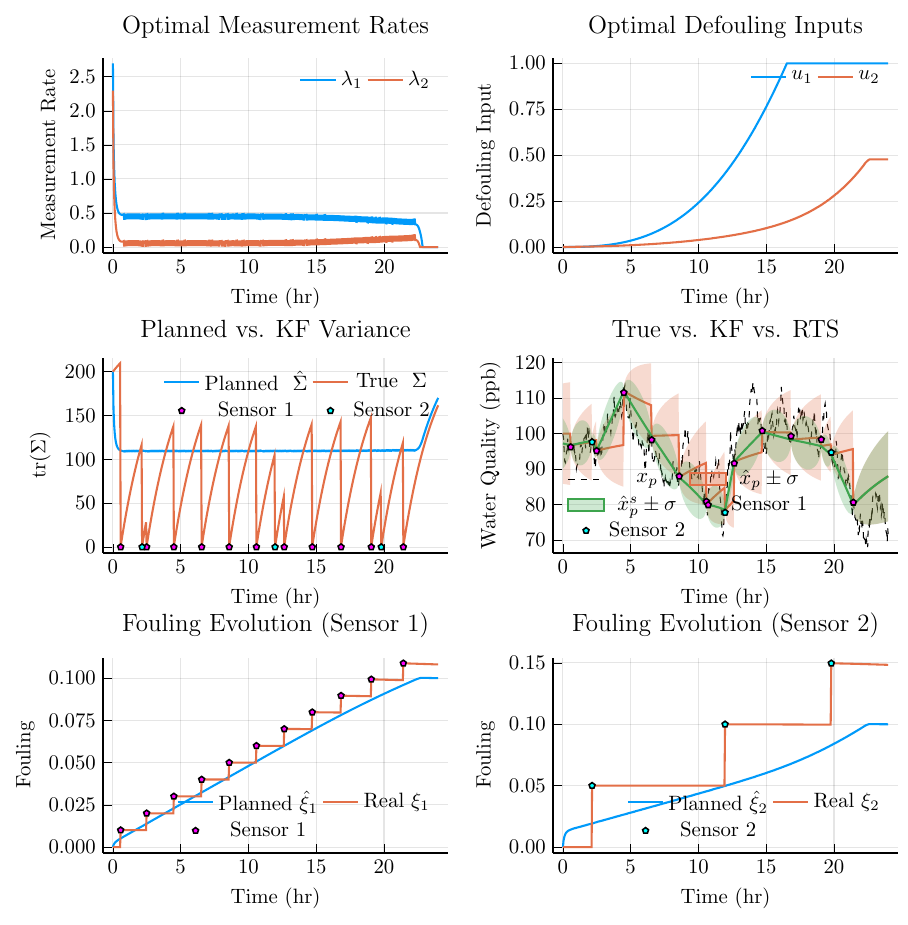}
    \caption{The modified water fouling example in Appendix \ref{sec:water} (nonconvex/nonconcave). The KF estimate is \(\hat{x}_p\) and the RTS estimate is \(\hat{x}_p^s\). }
    \label{fig:water_nonconvex}
\end{figure*}

\clearpage
\pagebreak
\section{State-Space Realization for Gaussian Processes}
\label{sec:SSM-GP}
A one-dimensional GP with a covariance kernel whose power spectral density is rational admits an exact representation as an SSM \cite{SIMO_KFGP,sarkka2012infinite,GP_SP_TODESCATO2020109032}. In this framework, forward filtering via the CD-KF combined with backward RTS smoothing computes the exact posterior mean and covariance of GP regression \cite{SIMO_KFGP}. The CD-KF and RTS algorithms achieve this with \(\mathcal{O}(N)\) computational complexity for \(N\) measurements (in contrast to \(\mathcal{O}(N^3)\) with the traditional Gaussian Process regression), thus enabling scalable inference for large datasets.  

For linear-Gaussian models, the smoothing covariance matrix is always dominated by the filtering covariance in the Loewner order. Consequently, the OCP in \eqref{eq:OCP} inherently controls both the filtering and smoothing uncertainties. This means our framework for sensor scheduling with auxiliary dynamics in this paper can be used in connection with efficient GP regression. 

For kernels with non-rational spectral densities (e.g., the squared-exponential kernel), approximations can be constructed using rational function expansions such as Padé approximants \cite{GP_SP_TODESCATO2020109032}. These approximations will yield finite-dimensional SSM representations.

\section{Unicycle Model}
\label{eq:uni_model}
For a unicycle model of a robot, we have the state
\[
\xi_p = \begin{bmatrix} p_r \\ \theta \end{bmatrix},
\quad
\text{where}
\quad
p_r = \begin{bmatrix} p_{r_1} \\ p_{r_2} \end{bmatrix}
\]
represents the planar position with respect to a global frame, and \(\theta\) is the heading angle. The linear (heading) velocity is \(v\) and the angular velocity is \(\omega\). The kinematic equations are then
\[
\dot{\xi}_p = \begin{bmatrix} \dot{p}_{r_1} \\ \dot{p}_{r_2} \\ \dot{\theta} \end{bmatrix}
= \begin{bmatrix} v \cos(\theta) \\[6pt] v \sin(\theta) \\[6pt] \omega \end{bmatrix}.
\]

\end{document}